\newcommand{\blue}[1]{\textcolor{blue}{#1}}
\newtheorem{proposition}{Proposition}
\newcommand{\qedsymbol}{\hfill \(\blacksquare\)}
\newcommand\Ccl{\ensuremath{\mathcal{C}}}
\newcommand\Ncl{\ensuremath{\mathcal{N}}}
\newcommand\Ocl{\ensuremath{\mathcal{O}}}
\newcommand\Bcl{\ensuremath{\mathcal{B}}}
\newcommand\Scl{\ensuremath{\mathcal{S}}}
\newcommand\Icl{\ensuremath{\mathcal{I}}}
\newcommand\Acl{\ensuremath{\mathcal{A}}}
\newcommand\Fcl{\ensuremath{\mathcal{F}}}
\newcommand\Lcl{\ensuremath{\mathcal{L}}}
\newcommand\Cs{\ensuremath{{\mathbb{C}}}}
\newcommand\Es{\ensuremath{{\mathbb{E}}}}
\newcommand\Rs{\ensuremath{{\mathbb{R}}}}
\newcommand\Ab{\ensuremath{ \mathbf{A} }}
\newcommand\Cb{\ensuremath{ \mathbf{C} }}
\newcommand\Db{\ensuremath{ \mathbf{D} }}
\newcommand\Fb{\ensuremath{ \mathbf{F} }}
\newcommand\Hb{\ensuremath{ \mathbf{H} }}
\newcommand\Ib{\ensuremath{ \mathbf{I} }}
\newcommand\Tb{\ensuremath{ \mathbf{T} }}
\newcommand\Ub{\ensuremath{ \mathbf{U} }}
\newcommand\Vb{\ensuremath{ \mathbf{V} }}
\newcommand\Wb{\ensuremath{ \mathbf{W} }}
\newcommand\Jb{\ensuremath{ \mathbf{J} }}
\newcommand\Sb{\ensuremath{ \mathbf{S} }}
\newcommand\Yb{\ensuremath{ \mathbf{Y} }}
\newcommand\Zb{\ensuremath{ \mathbf{Z} }}
\newcommand\ab{\ensuremath{ \mathbf{a} }}
\newcommand\fb{\ensuremath{ \mathbf{f} }}
\newcommand\hb{\ensuremath{ \mathbf{h} }}
\newcommand\nb{\ensuremath{ \mathbf{n} }}
\newcommand\qb{\ensuremath{ \mathbf{q} }}
\newcommand\rb{\ensuremath{ \mathbf{r} }}
\newcommand\ssb{\ensuremath{ \mathbf{s} }}
\newcommand\wb{\ensuremath{ \mathbf{w} }}
\newcommand\xb{\ensuremath{ \mathbf{x} }}
\newcommand\yb{\ensuremath{ \mathbf{y} }}
\newcommand\Gammab{\ensuremath{{\bm \Gamma}}}
\newcommand\lambdab{\ensuremath{{\bm \lambda}}}
\newcommand\Omegab{\ensuremath{{\bm \Omega}}}
\newcommand\Xib{\ensuremath{{\bm \Xi}}}
\newcommand\fc{\ensuremath{ f_{\rm c} }}
\newcommand\ghz{\ensuremath{ {\rm GHz} }}
\newcommand\Nt{\ensuremath{ N_{\rm T} }}
\newcommand\Nu{\ensuremath{ N_{\rm U} }}
\newcommand\Nr{\ensuremath{ N_{\rm R} }}
\newcommand\Ns{\ensuremath{ N_{\rm S} }}
\newcommand\Nrf{\ensuremath{ N_{\rm RF} }}
\newcommand\Ndt{\ensuremath{ N_{\rm td}^{\rm t} }}
\newcommand\Ndr{\ensuremath{ N_{\rm td}^{\rm r} }}
\newcommand\Nfdt{\ensuremath{ N_{\rm ftd}^{\rm t} }}
\newcommand\Nfdr{\ensuremath{ N_{\rm ftd}^{\rm r} }}
\newcommand\ps{\ensuremath{ {\rm ps} }}
\newcommand\BSR{\ensuremath{ {\rm BSR} }}
\newcommand\Frf{\ensuremath{ \mathbf{F}_{\rm RF} }}
\newcommand\Fbb{\ensuremath{ \mathbf{F}_{\rm BB} }}
\newcommand\mW{\textrm{ mW }}
\newcommand\dB{\textrm{ dB }}
\newcommand\SNR{\textrm{ SNR}}
\newcommand\SINR{\textrm{ SINR }}
\newcommand\Nnb{\ensuremath{ N_{\rm nb} }}
\newcommand\diag{\ensuremath{{\rm diag}}}
\newcommand\tr{\ensuremath{{\rm Tr}}}
\newcommand\rank{\ensuremath{{\rm rank}}}
\titlespacing{\section}{2pt}{15pt}{7pt}
\titlespacing{\subsection}{1pt}{4pt}{4pt}
\begin{document}

\title{Switch-based Hybrid Beamforming Transceiver Design for Wideband Communications with Beam Squint}

\author{Mengyuan~Ma,~\IEEEmembership{Student Member,~IEEE}, Nhan~Thanh~Nguyen,~\IEEEmembership{Member,~IEEE}, and~Markku~Juntti,~\IEEEmembership{Fellow,~IEEE}

\thanks{This paper was presented in part in Proc. ITG Workshop Smart Antennas 2021 \cite{ma2021switch}.}

\thanks{This work has been supported in part by Academy of Finland under 6G Flagship (Grant No. 318927) and EERA Project (Grant No. 332362).}
\thanks{Mengyuan Ma,  Nhan Thanh Nguyen, and Markku Juntti are with the Centre for Wireless Communications, University of Oulu, 90014 Oulu, Finland (e-mail:\{mengyuan.ma, nhan.nguyen, markku.juntti\}@oulu.fi).}
}

\maketitle

\vspace{-10mm}
\begin{abstract}
Hybrid beamforming (HBF) transceiver architectures based on frequency-independent phase shifters (PSs) are sensitive to phases and physical directions, resulting in limited capability to compensate for the detrimental effects of the beam squint. Motivated by the fact that switches are phase-independent and more power/cost efficient than PSs, we consider the switch-based HBF (SW-HBF) for wideband large-scale multiple-input multiple-output systems in this paper. We first derive a closed-form expression of the beam squint ratio unveiling that the severity of beam squint linearly increases with the number of antennas, the antenna spacing distance, and the fractional bandwidth. We then focus on the SW-HBF designs to maximize the spectral efficiency (SE) in both single-user (SU) and multiuser (MU) systems. The formulated problems in both cases exhibit intractable non-convex and mixed-integer challenges. To address them, for the SU case, by combining the tabu search (TS) method and projected gradient ascend (PGA), we propose an efficient heuristic PGA-TS algorithm to design analog beamformers while the digital ones admit closed-form solutions. For the MU case, we develop a two-step algorithm based on fractional programming and the PGA-TS method. Simulation results show that the proposed SW-HBF schemes are efficient and can outperform PS-based HBF architectures in terms of both SE and energy efficiency in wideband systems.
\end{abstract}

\begin{IEEEkeywords}
Hybrid beamforming, wideband systems, energy efficiency, spectral efficiency, beam squint.
\end{IEEEkeywords}

\IEEEpeerreviewmaketitle

\section{Introduction} \label{sec:introduction}
\IEEEPARstart{W}{ideband} communication systems are promising to meet the ever-increasing demand for ultra-high-speed data rates of future wireless networks \cite{jiang2021road}. Millimeter wave (mmWave) communications thereby have been considered essential for future wireless communication systems \cite{ahmed2018survey}. 
The short wavelength of mmWave signals allows compact deployment of large numbers of antenna elements, facilitating the implementation of large-scale or massive multiple-input multiple-output (MIMO) systems to compensate for severe path loss \cite{mumtaz2016mmwave}. However, a large-scale MIMO system deploying the conventional fully digital beamforming (DBF) architectures requires excessively large numbers of power-hungry radio frequency (RF) chains causing prohibitive power consumption and hardware costs \cite{buzzi2018energy}. Such limitations pose significant challenges to the system. Therefore, hybrid beamforming (HBF) architectures have been proposed to divide the overall beamformer into a high-dimensional analog beamformer and a lower-dimensional digital beamformer \cite{zhang2005variable}. Such architectures allow a low number of digital and RF branches while guaranteeing multiplexing gain, significantly reducing power consumption and implementation complexity \cite{nguyen2022hybrid}. 

\subsection{Prior Works}\label{sec:introduction prior works}
Driven by the tradeoff between spectral efficiency (SE) and energy efficiency (EE), various HBF architectures have emerged based on the fact that the analog beamformer can be realized by a network of phase shifters (PSs) \cite{el2014spatially,sohrabi2016hybrid,gao2016energy,lin2019transceiver,zhao2021partially}, switches (SWs) \cite{mendez2016hybrid,nosrati2021online,nosrati2022switch}, or both \cite{payami2018hybrid,kaushik2019dynamic,nguyen2019unequally,gadiel2021dynamic}. Specifically,  it was shown in \cite{mendez2016hybrid} that PS-based HBF (PS-HBF) architectures can reap more SE than SW-based HBF (SW-HBF) architectures, while the latter has more advantages of EE assuming the same connection style used in the analog beamformer. Inspired by this, the combination of PSs and SWs for HBF was investigated in \cite{payami2018hybrid,kaushik2019dynamic,nguyen2019unequally,gadiel2021dynamic}, which shows such HBF design can attain a better tradeoff between the SE and EE than the purely SW-HBF or PS-HBF schemes. The aforementioned works mainly focus on narrowband systems. In contrast, HBF design for wideband systems is more challenging because the frequency-flat analog beamformer must be shared across the whole signal bandwidth or all the subcarriers in the multicarrier case.  To tackle the challenges, efficient PS-HBF algorithms have been proposed in \cite{park2017dynamic,sohrabi2017hybrid,chen2019channel,ma2021closed} by searching the eigenvector spaces and constructing the common analog beamformer across all frequencies. However, the detrimental \textit{beam squint} effect \cite{cai2016effect}, which can significantly degrade the performance of wideband systems, was ignored in those works. In conventional narrowband systems, the analog beamformer aligned with the physical direction at the central frequency can almost achieve the maximum array gain for signals at other frequencies. However, in wideband systems, due to frequency-independent PSs, the beam aligned with the central frequency squints to the incident angles of the incoming signals at other frequencies, resulting in non-negligible array gain loss, termed the beam squint effect \cite{cai2016effect}. Notably, this effect becomes more severe for systems with wider bandwidth and more antennas \cite{dai2022delay}. As such, the performance of the HBF strategies without considering the beam squint issue can significantly deteriorate in mmWave and THz communications, as shown in \cite{gao2021wideband,dai2022delay}.

To mitigate the beam squint effect, several approaches have been proposed in the literature \cite{cai2016effect,liu2018space,dai2022delay,gao2021wideband, wu20223,beam2023maicc,ozen2023interference}. Specifically, wide beams are designed in \cite{cai2016effect,liu2018space} to adapt for large bandwidth, which can reduce the loss in array gain induced by the beam squint effect. However, since the beam squint effect is essentially caused by the time delay spread over the antenna array \cite{wang2018spatial}, it cannot be eliminated solely relying on the frequency-independent phased array. Embedding true-time-delayers (TTDs) into the RF front-end can introduce frequency-dependent phase shifts, alleviating the beam squint. Therefore, TTD-aided HBF architectures have been proposed in \cite{dai2022delay,gao2021wideband} wherein TTDs are inserted between the RF chain and the network of PSs to provide additional delay rather than replacing the PSs. While the approach has been shown to effectively eliminate the beam squint, the implementation can be a significant challenge in practice. Since the time delay over the antenna array increases with signal bandwidth and array size, more on-chip TTDs are required for compensation \cite{ozen2023interference}, which brings forth the scalability challenges regarding insertion losses, power consumption, chip area, and linearity \cite{cao2015advanced}. For example, in a system operated at $300~\ghz$ with a 256-antenna uniform linear array (ULA), the delay spread over the antennas could be up to $426~\ps$ \cite{dai2022delay} and the required TTD could have a power consumption of $285~\mW$ \cite{cho2018true}. Furthermore, TTDs providing large variable delays can cause significant insertion loss at high frequencies, which can be greater than $40~\dB$ for a delay of $400~\ps$ at $20~\ghz$ \cite{hu20151}. The above facts have motivated the deployment of uniform planar array (UPA) \cite{nguyen2022beam,wu20223,beam2023maicc} to diminish the delays across the antenna elements, alleviating the demands for TTDs. Additionally, the intelligent transmission surface used in the HBF transmitter for beamforming \cite{wu2023hybrid} functions like a UPA and can also alleviate the beam squint. Furthermore, fixed-time TTDs (FTTDs) \cite{yan2022energy} can be used to further reduce power consumption. However, FTTDs still share the large variable delay and insertion losses as conventional TTDs do. 


\subsection{Contributions}\label{sec:contributions}



In contrast to the TTD and PS-based analog architectures discussed above, a SW-based architecture boasts lower power consumption and insertion loss \cite{mendez2016hybrid}. The exclusive use of switches simplifies the design of the RF chain and facilitates rapid adjustment to channel variations \cite{nosrati2021online}. Specifically, the hardware implementation of the adjustable TTD typically involves numerous switches and other components \cite{yan2022energy}, especially for large-scale systems. In contrast to the TTD-based architectures, SW-based ones obviate the necessity for additional power amplification to counteract insertion losses, thus, significantly reducing circuital costs. Furthermore, switches designed for mmWave systems can operate at speeds on the order of nanoseconds or even sub-nanoseconds \cite{li2020dynamic}, which could be far less than the channel coherence time. For example, the coherence time for a carrier frequency of 110 GHz is approximately $0.2$ ms \cite{han2022terahertz}. Therefore, SW-based analog architectures are promising to meet the requirements of practical implementations. 
Moreover, SW-HBF architectures are less affected by the beam squint effect than PS-HBF ones. This characteristic contributes to improved SE and EE performance in wideband systems. However, to date, the advantages of SW-HBF architectures in wideband systems remain unexplored in the literature.

In this paper, our focus extends to wideband single-user and multiuser massive MIMO systems. Through an in-depth analysis of the beam squint effect and carefully designed SW-HBF, we show that SW-HBF architectures exhibit enhanced resilience to beam squint when compared to their PS-HBF counterparts. Given their cost effectiveness and low power consumption, SW-HBF architectures emerge as a compelling choice for wideband massive MIMO systems. Our specific \textbf{contributions} can be summarized as follows:
\begin{itemize}

    \item We first obtain a closed-form expression of the beam squint ratio (BSR) that quantifies the severity of the beam squint effect in wideband MIMO systems \cite{dai2022delay}. 
    The analytical results show that the severity of the beam squint effect increases linearly with the number of antennas, the antenna spacing distance, and the fractional bandwidth. This reveals that reducing the spacing of the antenna could be a promising approach to mitigate the beam squint. Moreover, our analysis shows that a SW-based beamformer may achieve a higher expected array gain than a PS-based one.
    
    \item We then focus on joint transmit precoding and receive combining in SW-HBF architectures for single-user (SU) MIMO communications, aiming to maximize the average SE. The formulated problems are challenging due to the non-convex mixed-integer nature. To overcome the challenge, we propose a projected gradient ascent-aided (PGA) tabu search (TS) algorithm, which finds an efficient solution to the analog beamformer, while the digital beamformers are obtained with closed-form solutions based on the solved analog ones.

    \item Furthermore, we study the hybrid precoding design in multiuser (MU) multiple-input single-output (MISO) downlink, aiming to maximize the average sum rate. To address the challenging non-convex mixed-integer problem, we propose an efficient two-step algorithm wherein the digital baseband beamformers and the relaxed analog one are first obtained with the fractional programming framework. Then an efficient analog beamformer is found by performing the PGA-TS algorithm based on the relaxed one.

    
    \item Extensive numerical simulations are provided to evaluate the performance of the proposed SW-HBF algorithms in both the considered SU and MU systems. The numerical results demonstrate that the proposed SW-HBF schemes perform close to the optimal exhaustive search (ES) scheme with significantly reduced complexity. Moreover, it is shown that under a severe beam squint effect, the SW-HBF scheme can outperform the TTD-based and PS-based HBF schemes.  
    
\end{itemize}

To focus on the analysis and design of SW-HBF schemes, we assume the availability of full channel state information (CSI) at both the transmitter and the receiver \cite{yu2016alternating,sohrabi2017hybrid,li2020dynamic,yan2022energy}. Thereby we gain a deep understanding of the fundamental performance limits of beam squint in wideband HBF architectures. Moreover, it is shown that CSI can be estimated with the compressed sensing method \cite{mendez2016hybrid}. We note that this work is an extension of our previous conference paper \cite{ma2021switch}, where we only considered the receiver design for SU-MIMO systems. We herein not only investigate the joint designs of both transmit precoding and receive combining for SU-MIMO but also study the MU-MISO downlink precoding design. Moreover, we also present a theoretical analysis of the beam squint effect, a detailed analysis of the computational complexity of the proposed SW-HBF algorithms, and extensive numerical simulation results. 
\subsection{Organization and notation}

The remainder of this paper is structured as follows. We commence with the presentation of the signal and channel models in Section \ref{sc:system model}. In Section \ref{sc:beam squint effect analysis}, we delve into an analysis of the beam squint effect. Subsequently, we introduce SW-HBF designs for SU-MIMO and MU-MISO systems in Section \ref{sc:SW-HBF in SU-MIMO} and Section \ref{sec:SW-HBF for MU-MISO}, respectively. Section \ref{sc:simulation results} provides our simulation results, and we conclude this paper in Section \ref{sc:conclusions}.

Throughout this paper, scalars, vectors, and matrices are denoted by the lower-case, boldface low-case, and boldface upper-case letters, respectively. $\left(\cdot\right)^T$, $\left(\cdot\right)^H$, and $\left(\cdot\right)^{\dagger}$ represent the transpose, conjugate transpose, and Moore–Penrose inverse operations. $\left\|\cdot\right\|_p$ and $\left\|\cdot\right\|_F$ denote the $p$-norm and Frobenius norm. The expectation operator is represented by $\Es\left(\cdot\right)$. Furthermore, we use $\Acl$ to denote a set and $\left|\Acl \right|$ to denote the cardinality of $\Acl$. In addition, $\left|a \right|$ and $\left|\Ab \right|$ denote the absolute value of the scalar $a$ and the determinant of the matrix $\Ab$. Moreover, complex and binary spaces of size $m \times n$ are represented by $\Cs^{m\times n}$ and $\Bcl^{m\times n}$, respectively. Finally, $\Ib_N$ represents an identity matrix of size $N\times N$, and $\Ab(i,j)$ denotes the element of $\Ab$ in the $i$-th row and the $j$-th column. 

\section{System Model}\label{sc:system model}
\subsection{Signal Model}
   \begin{figure*}[t]
        \centering	
         \includegraphics[width=0.9\textwidth]{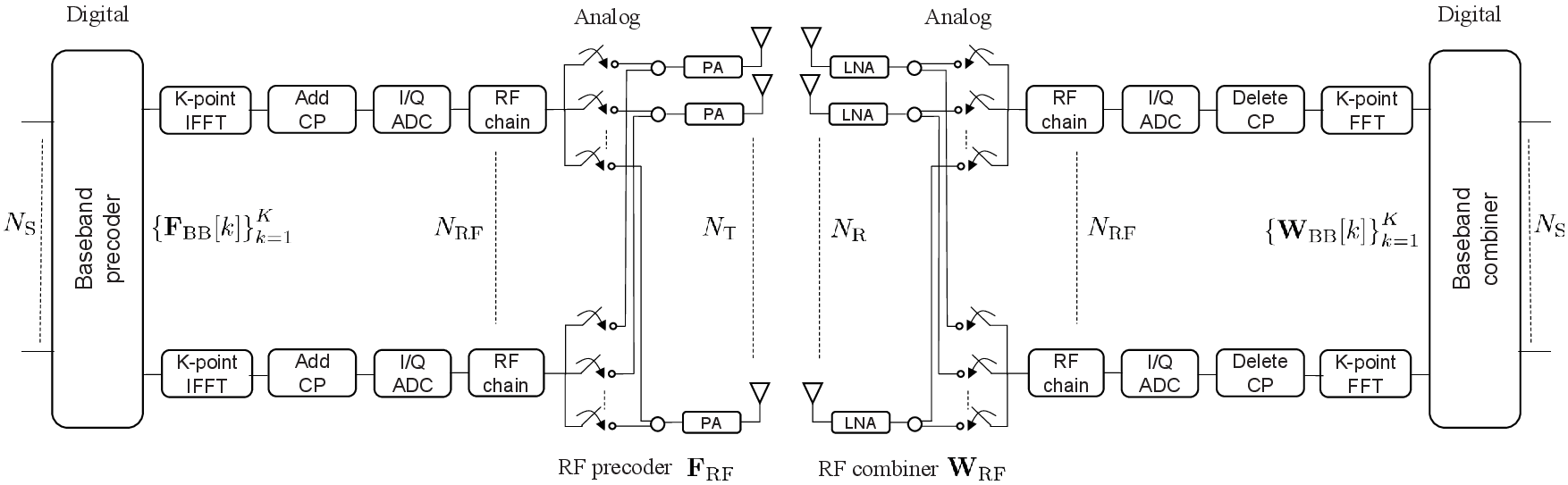}
         \captionsetup{font={small}}
         \vspace{-0.2cm}
        \caption{SW-HBF transceiver structure in a wideband MIMO-OFDM system}
        	\label{fig:HBF transceiver structure in mmWave MIMO-OFDM system}
        	\vspace{-0.5cm}
  \end{figure*}
We consider a wideband SU-MIMO orthogonal frequency-division multiplexing (OFDM) system illustrated in Fig.\ \ref{fig:HBF transceiver structure in mmWave MIMO-OFDM system} wherein the analog beamformers are realized by the fully connected SW-based networks. The transmitter (Tx) equipped with $N_{\rm T}$ antennas simultaneously transmits $N_{\rm{S}}$ data streams to the receiver (Rx) equipped with $N_{\rm R}$ antennas. For ease of exposition and without loss of generality, we assume that both the Tx and Rx are equipped with $N_{\rm RF}$ RF chains and $N_{\rm{S}} \leq N_{\rm RF} \ll \min(N_{\rm T},N_{\rm R})$. Denote by $K$ the number of subcarriers and by $\ssb_k \in \Cs^{N_{\rm S} \times 1}$ the signal vector transmitted via the $k$-th subcarrier, $\Es\left[\ssb_k\ssb^H_k\right]=\Ib_{N_{\rm S}},\; k=1,\cdots,K$. At the Tx, $\ssb_k$ is first precoded by a digital precoder $\Fb_{\rm BB}[k] \in \Cs ^{N_{\rm RF} \times N_{\rm S}}$ in the frequency domain, which is then transformed into the time domain by the inverse fast Fourier transforms (IFFTs), and up-converted to the RF domain using the RF chains. The RF signals are then further processed by an analog precoder $\Fb_{\rm RF}\in \Bcl ^{N_{\rm T} \times N_{\rm RF}}$. At the Rx, the received signal vector is first combined by an analog combiner $\Wb_{\rm RF}\in \Bcl ^{N_{\rm R} \times N_{\rm RF}}$. Note that both the analog precoder $\Fb_{\rm RF}$ and combiner $\Wb_{\rm RF}$ are frequency-flat, i.e., they are used to process all the $K$ subcarriers. After performing fast Fourier transforms (FFTs), the combined signal is further combined in the frequency domain by a baseband combiner $\Wb_{\rm BB}[k] \in \Cs ^{N_{\rm RF} \times N_{\rm S}}$ for each subcarrier. Let $\Hb_k\in \Cs^{N_{\rm R}\times N_{\rm T}}$ be the channel at the $k$-th subcarrier, and $\Fb_k = \Fb_{\rm RF}\Fb_{\rm BB}[k]$, $\| \Fb_k \|^2_F \leq P_{\rm b}, \forall k$, with $P_{\rm b}$ being the power budget for each subcarrier. The received signal at the $k$-th subcarrier is given as
\begin{equation}\label{eq:received signal}
    \yb_k= \Wb_k^H  \Hb_k \Fb_k \ssb_k + \Wb_k^H \nb_k,
\end{equation}
where $\Wb_k = \Wb_{\rm RF}\Wb_{\rm BB}[k]$, and $\nb_k\sim \Ccl\Ncl(\boldsymbol{0}, \sigma_{\rm n}^2\Ib_{N_{\rm R}})$ is the additive white Gaussian noise (AWGN) vector with $\sigma_{\rm n}^2$ being its variance.

\subsection{Wideband Channel Model}\label{sc:channel model}
We adopt a statistical tap-delay profile modeling the impulse response and multipath parameters for ultra-broadband channels \cite{han2018propagation}. Assuming that ULAs are employed\footnote{In this paper, we consider the ULA for simplicity, but the analysis of the beam squint effect can be extended to the UPA \cite{beam2023maicc}. Besides, the proposed SW-HBF schemes are applicable to antenna arrays of arbitrary geometry.}, the $d$-th tap of the channel at frequency $f$ can be modeled as \cite{alkhateeb2016frequency,li2020dynamic}
\begin{equation}\label{eq:d delay channel}
   \Hb_f[d]= \zeta\sum\limits_{l=1}^{L_{\rm p}} \alpha_l p\left(d T_{\rm s}-\tau_l\right) \ab_{\rm r}\left( \theta_l^{\rm r}, f\right) \ab_{\rm t}^H\left( \theta_l^{\rm t}, f\right), 
\end{equation}
where $\zeta \triangleq \sqrt{\frac{N_{\rm T}N_{\rm R}}{L_{\rm p}}}$ with  $L_{\rm p}$ being the number of distinct scattering paths, and $\alpha_l \sim \Ccl\Ncl(0,1)$ and $\tau_l$ are the complex channel gain and the delay of the $l$-th path, respectively; $\theta_l^r$ and $\theta_l^t$ represent the angles of arrival/departure (AoA/AoD) of the $l$-th path, respectively; $T_{\rm s}$ is the sampling period; $p(\tau)$ denotes the pulse-shaping filter. In \eqref{eq:d delay channel}, $\ab_{\rm r}\left( \theta_l^{\rm r}, f\right) $ and $\ab_{\rm t}\left( \theta_l^{\rm t}, f\right)$ denote the transmit and receive array response vectors at frequency $f$ for the $l$-th path with AoA $\theta_l^{\rm r}$ and AoD $\theta_l^{\rm t}$, respectively. The array response vector $\ab(\theta,f) \in \{\ab_{\rm r}\left( \theta_l^{\rm r}, f\right), \ab_{\rm t}\left( \theta_l^{\rm t}, f\right)\}$ is given as
\begin{equation}\label{eq:array steer vector for MIMO}
\small
   \ab(\theta,f)=\frac{1}{\sqrt{N}}\left[1,e^{-j 2 \pi \Delta \frac{f}{f_{\rm c}}\sin(\theta) },\ldots, e^{-j 2 \pi (N-1)  \Delta \frac{f}{f_{\rm c}}\sin(\theta) }  \right]^T, 
\end{equation}
with $N \in \{N_{\rm T}, N_{\rm R} \}$,  $\theta \in \{ \theta_l^{\rm r}, \theta_l^{\rm t} \}$. Here, $\Delta \triangleq \frac{ d_{\rm a} }{ \lambda_{\rm c} }$ denotes the normalized antenna spacing where $d_{\rm a}$ and $\lambda_{\rm c}$ represent the physical antenna spacing and the wavelength of the central frequency $f_{\rm c}$, respectively.  
  The frequency-domain channel at the $k$-th subcarrier can be expressed as $\Hb_k=\sum_{d=0}^{D-1} \Hb_{f_k}[d] e^{-j\frac{2\pi k}{K}d}$ \cite{tse2005fundamentals}\blue{,}
where $D$ specifies the maximum delay spread. Here, $f_{k}\triangleq f_{\rm c}+\left(k-\frac{K+1}{2} \right)\frac{B}{K},\ \forall k$ denotes the frequency of the $k$-th subcarrier with $B$ denoting the total system bandwidth.

\vspace{-2mm}
\section{Analysis of the Beam Squint Effect}\label{sc:beam squint effect analysis} 

As shown in \cite{dai2022delay}, the beam squint effect can cause significant performance degradation in wideband systems. The beam squint ratio is defined in \cite{dai2022delay} to quantify its severity so that a larger BSR implies a more severe effect. While it is a useful metric, the definition of the BSR involves integration, which is complex to compute and not straightforward to get insights into the factors causing the phenomenon. 
In this section, we first derive a closed-form expression of the BSR. We then analyze the expected array gains (EAGs) for both PS-based and SW-based beamformers to evaluate their corresponding relationship with the BSR.
\subsection{Closed-form BSR}\label{sec:beam squint ratio}

The BSR is defined as the expectation of the ratio between the physical direction deviations and half of the beamwidth for all subcarrier frequencies and physical directions \cite{dai2022delay}. Specifically, let $N$ denote the number of antennas in a ULA, and let $\xi_k\triangleq \frac{f_k}{f_{\rm c}}$ and $\vartheta \triangleq \sin \theta$, where $\theta$ denotes the AoA/AoD. Because the half-beamwidth is $\frac{1}{N\Delta}$ \cite{tse2005fundamentals}, and the physical direction deviation for the $k$-th subcarrier can be represented by $\left|\left(1-\xi_k\right)\vartheta \right|$, the BSR is expressed as \cite{dai2022delay}  
\vspace{-1mm}
\begin{equation}\label{eq:BSR definition}
    \text{ BSR}\triangleq \frac{1}{2} \int^{1}_{-1} \frac{1}{K} \sum\limits_{k=1}^K \frac{\left|\left(1-\xi_k\right)\vartheta \right|}{\frac{1}{N\Delta}} d\vartheta.
\end{equation}
Note that  the BSR was originally defined in \cite{dai2022delay} for half-wavelength antenna spacing. We herein take the normalized antenna spacing $\Delta$ into consideration in \eqref{eq:BSR definition}. We present below a more insightful approximation of the BSR. 
\begin{proposition}\label{remark:BSR equation}
 The BSR can be approximately given as
 \begin{equation}\label{eq:BSR closed-form}
   \text{ BSR} \approx \frac{N b\Delta}{8},
   \end{equation}
where $N$, $b$, and $\Delta$ denote the number of antennas, fractional bandwidth, i.e., $b\triangleq \frac{B}{f_{\rm c}}$, and the normalized antenna spacing, respectively. The approximation in \eqref{eq:BSR closed-form} is tight when the number of subcarriers is sufficiently large.
\end{proposition}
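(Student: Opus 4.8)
The plan is to exploit the product structure of the integrand so that the double integral–sum in \eqref{eq:BSR definition} collapses to a single scalar computation. First I would observe that $\left|\left(1-\xi_k\right)\bar\theta\right| = \left|1-\xi_k\right|\,|\bar\theta|$, and that $\xi_k$ depends only on $k$ while $\bar\theta$ is the integration variable. Hence the two factors separate, and since the half-beamwidth is $\tfrac{1}{N\Delta}$ I can rewrite
\[
  \text{BSR} = \frac{N\Delta}{2}\left(\frac{1}{K}\sum_{k=1}^K \left|1-\xi_k\right|\right)\int_{-1}^{1}|\bar\theta|\,d\bar\theta .
\]
The $\bar\theta$-integral is elementary, $\int_{-1}^{1}|\bar\theta|\,d\bar\theta = 1$, so the whole task reduces to evaluating the frequency average $\frac{1}{K}\sum_{k}\left|1-\xi_k\right|$.

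Next I would substitute the explicit subcarrier frequencies. From \eqref{eq:subcarrier difinition} with $b = B/f_{\rm c}$ one gets $\xi_k = 1 + \left(k-\frac{K+1}{2}\right)\frac{b}{K}$, so that $\left|1-\xi_k\right| = \frac{b}{K}\left|k-\frac{K+1}{2}\right|$, and therefore
\[
  \frac{1}{K}\sum_{k=1}^K\left|1-\xi_k\right| = \frac{b}{K^2}\sum_{k=1}^K\left|k-\frac{K+1}{2}\right|.
\]
The core step is then to evaluate the index sum $S = \sum_{k=1}^K\left|k-\frac{K+1}{2}\right|$, which is the total absolute deviation of $\{1,\dots,K\}$ from its midpoint. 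Splitting by parity, I expect $S = \frac{K^2-1}{4}$ for odd $K$ and $S = \frac{K^2}{4}$ for even $K$, so that in both cases $S = \frac{K^2}{4} + O(1)$. Substituting $S\approx\frac{K^2}{4}$ gives $\frac{1}{K}\sum_k\left|1-\xi_k\right|\approx\frac{b}{4}$, and combining with the prefactor yields $\text{BSR}\approx\frac{N\Delta}{2}\cdot\frac{b}{4} = \frac{Nb\Delta}{8}$, as claimed.

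An equivalent and perhaps cleaner justification of the sum—and the one tied to the ``large number of subcarriers'' remark following \eqref{eq:BSR definition}—is to read $\frac{1}{K}\sum_k\left|1-\xi_k\right|$ as a Riemann sum: as $K\to\infty$ the points $\xi_k$ fill the interval $\left[1-\frac{b}{2},\,1+\frac{b}{2}\right]$ uniformly, so the average tends to $\frac{1}{b}\int_{-b/2}^{b/2}|u|\,du = \frac{b}{4}$, recovering the same constant.

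I would stress that the only genuine approximation enters through dropping the $O(1)$ correction to $S$, which is an $O(1/K^2)$ relative error in the odd case and exactly zero in the even case. The main ``obstacle'' is thus purely bookkeeping—handling the two parities of $K$ and confirming the closed form of the deviation sum—rather than any real analytic difficulty; the stated identity is effectively exact in the many-subcarrier regime for which the BSR is defined.
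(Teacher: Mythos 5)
Your proof is correct and follows essentially the same route as the paper's: factor the integrand, evaluate $\int_{-1}^{1}|\bar\theta|\,d\bar\theta = 1$, substitute the subcarrier frequencies from \eqref{eq:subcarrier difinition}, and reduce everything to the mean absolute deviation of the index set $\{1,\dots,K\}$ from its midpoint. The only difference is how that sum is handled: the paper invokes the Riemann-sum limit $\frac{1}{K}\sum_{k}\left|\frac{k}{K}-\frac{K+1}{2K}\right| \approx \int_0^1 \left|x-\frac{1}{2}\right|dx = \frac{1}{4}$ for $K \gg 1$, whereas you evaluate the sum exactly by parity ($\frac{K^2}{4}$ for even $K$, $\frac{K^2-1}{4}$ for odd $K$), which yields a marginally sharper conclusion — the approximation is exact for even $K$ and carries only an $O(1/K^2)$ relative error for odd $K$ — but does not change the substance of the argument.
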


{\it Proof:} Recalling that $f_{k}\triangleq f_{\rm c}+\left(k-\frac{K+1}{2} \right)\frac{B}{K},\ \forall k$, we obtain 
\begin{align}\label{eq:xi bandwidth}
    \xi_k = \frac{f_k}{f_{\rm{c}}}&=1+ \left(\frac{k}{K}-\frac{K+1}{2K} \right) b,
\end{align}
where $b\triangleq \frac{B}{f_{\rm c}}$. Therefore, we can write
\begingroup
\allowdisplaybreaks
\begin{align}
        \text{ BSR}  &=\frac{N\Delta b }{2K}  \sum\limits_{k=1}^K \left|\frac{k}{K}-\frac{K+1}{2K}\right|\int^{1}_{-1} | \vartheta|d\vartheta \nonumber \\
        &=\frac{N\Delta b }{2} \cdot \frac{1}{K}\sum\limits_{k=1}^K \left|\frac{k}{K}-\frac{K+1}{2K}\right| \notag \\
        &\overset{(a)}{\approx} \frac{N\Delta b }{2} \int^{1}_{0} \left| x-\frac{1}{2} \right|dx 
        =\frac{N\Delta b}{8}, \label{BSE_approx}
\end{align}
\endgroup
where approximation $(a)$ follows from the facts that $K \gg 1$ and $\int^{1}_{0} f(x)dx=\displaystyle \lim_{K \to \infty} \frac{1}{K} \sum\limits_{k=1}^K f\left( \frac{k}{K} \right)$ for a continuous real-valued function $f(\cdot)$ defined on the closed interval $[0, 1]$ \cite{thomas2005thomas}. \qedsymbol 

   \begin{figure}[htbp]
   \small
   \vspace{-0.8cm}
        \centering	
         \includegraphics[width=0.4\textwidth]{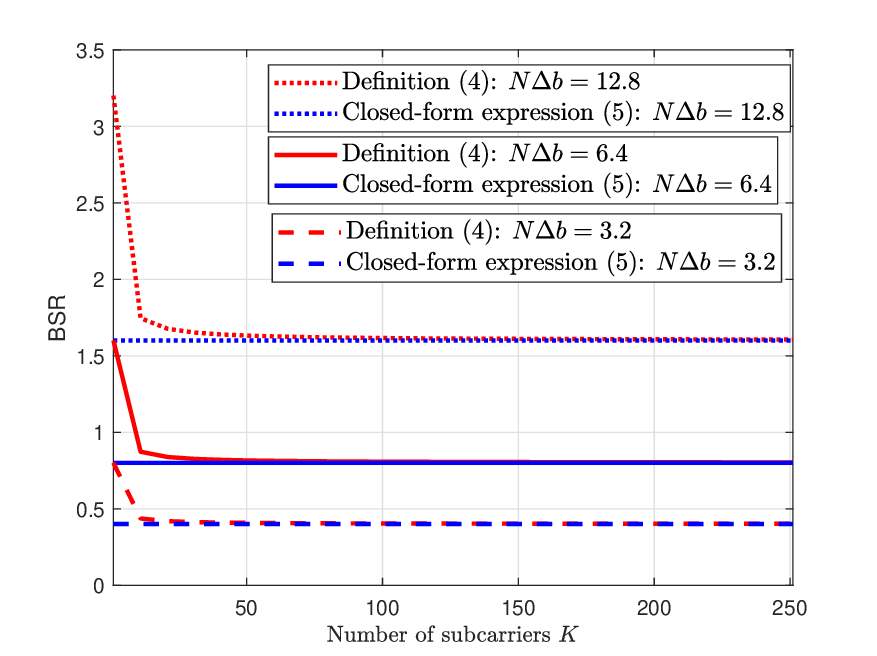}
         \captionsetup{font={small}}
         \vspace{-0.2cm}
        \caption{The BSR obtained by the definition \eqref{eq:BSR definition} and the closed-form expression \eqref{eq:BSR closed-form} with $N \Delta b=3.2,6.4, 12.8$.}
        	\label{fig:BSR vs number of subcarriers}
        \vspace{-0.4cm}
  \end{figure}
  
  Fig. \ref{fig:BSR vs number of subcarriers} shows the proposed closed-form expression of the BSR  \eqref{eq:BSR closed-form} and its analytical definition \eqref{eq:BSR definition}. It can be observed that the closed form is sufficiently accurate for $K\geq 100$. Since the BSR concept targets systems with a large number of subcarriers, which is the typical case in wideband multicarrier communications, the approximation in \eqref{eq:BSR closed-form} can adequately approach the value given by \eqref{eq:BSR definition}. Proposition \ref{remark:BSR equation} implies that the severity of the beam squint effect \textbf{linearly} increases with the number of antennas $N$, fractional bandwidth $b$, and normalized antenna spacing $\Delta$, which was not observed in \cite{dai2022delay} and aligned with observations in \cite{ozen2023interference}.  Furthermore, it shows that the beam squint effect becomes more severe with a larger \textbf{fractional bandwidth} $b$ (rather than bandwidth) but can be mitigated by reducing the antenna spacing $\Delta$, e.g., via spatial oversampling. 
  This is because an array with a reduced antenna spacing forms a larger beamwidth, enhancing the array gain of the marginal frequencies in the bandwidth. As shown in Figs. \ref{fig:BSR2} and \ref{fig:BSR3} on the top of the next page, when $\Delta$ decreases from $1/2$ to $1/4$, the beamwidth of the main lobes is broadened, and at the same time, the BSR is reduced by half. Fig.\ \ref{fig:BSR_all} shows the normalized array gains versus the subcarrier index. It is seen that the beam squint effect becomes negligible for BSR$= 0.1$. We will further demonstrate this through simulations in Section \ref{sc:simulation results}.

  \begin{figure*}[htbp]
  \small
  \vspace{-0.5cm}
        \centering
        \hspace{-8mm}
        \subfigure[${\rm BSR}= 1,\Delta=1/2,N=160$.]
        { \label{fig:BSR2} \includegraphics[width=0.34\textwidth]{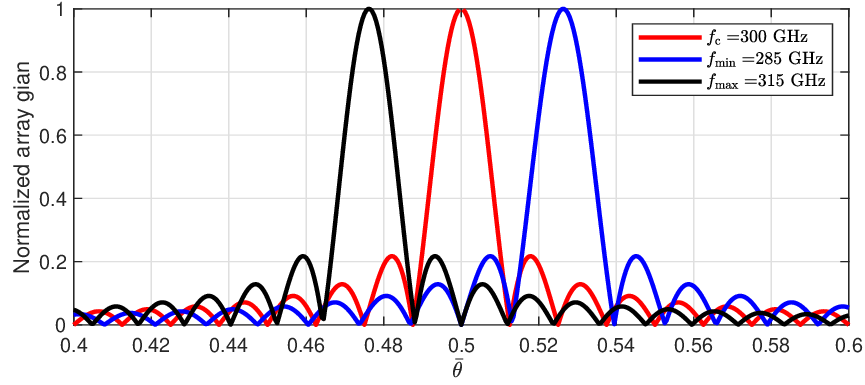}}
        \subfigure[${\rm BSR}= 0.5,\Delta=1/4,N=160$. ]
        {\label{fig:BSR3}\includegraphics[width=0.33\textwidth]{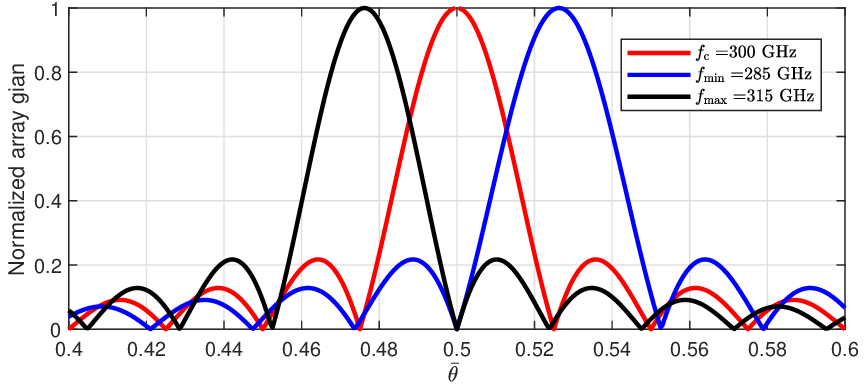}}
        \subfigure[Normalized array gain for all subcarriers with $\Delta=1/2$.]
        {\label{fig:BSR_all} \includegraphics[width=0.33\textwidth]{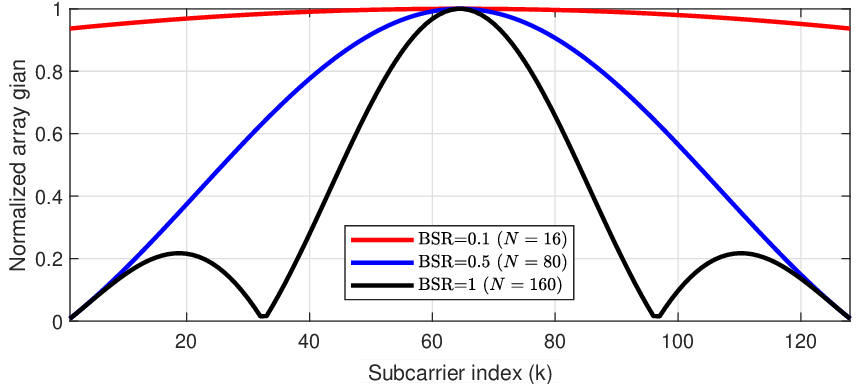}}
        \captionsetup{font={small}}
        \vspace{-0.3cm}
        \caption{ Illustration of beam squint effect in multicarrier systems using PS-based ULAs with $f_{\rm c}=300~{\rm GHz},B=30~{\rm GHz}, K=128$. Figs.\ (a) and (b) present the normalized array gain of three frequencies, namely the central frequency, maximum frequency, and minimum frequency, versus $\vartheta$. Fig.\ (c) indicates the normalized array for all subcarriers in three BSR cases, i.e., $({\rm BSR}= 0.1,N=16)$, $({\rm BSR}= 0.5,,N=80)$ and $({\rm BSR}= 1,N=160)$, respectively.}
        \label{fig:Beam squint effect for different BSR}
        	\vspace{-0.5cm}
  \end{figure*}


\subsection{EAGs of SW- and PS-based Beamformers}\label{sec:EAG of SW and PS array}

\subsubsection{PS-based Beamformer}

The normalized array gain achieved by the PS-based analog beamforming vector $\tilde{\wb}=\left[e^{j\varpi _1}, \ldots,e^{j\varpi_n},\ldots,e^{ j\varpi _N} \right]^T$ for AoA $\theta$ at frequency $f_k$ can be given as \cite{cai2016effect} 
\begin{align}\label{eq:func NAG}
  g\left( \tilde{\wb}, \theta,f_k \right)&=\frac{1}{\sqrt{N}}\left| \tilde{\wb}^H \ab_{\rm r}\left( \theta,f_k \right) \right|  \nonumber \\
  &=\frac{1}{N}\left|\sum\limits_{n=1}^{N}e^{j\left[\varpi_n-2\pi(n-1)\Delta  \frac{f_k}{f_{\rm c}}\sin \theta \right]}\right|.
\end{align}
For a multicarrier system, the beamformer $\tilde{\wb}$ is conventionally designed for the central frequency, i.e., $f_{\rm{c}}$. To obtain the highest array gain, the beamformer $\tilde{\wb}$ should align with the AoA $\theta$ of the arriving signal \cite{cai2016effect}. Specifically, the $n$-th phase shift in $\tilde{\wb}$ should be set as $\varpi_n=2\pi(n-1)\Delta \vartheta, n=1,\ldots, N$.
With this, we can rewrite \eqref{eq:func NAG} as
\begin{align}\label{eq:PS NAG bi-variable}
        g\left(\vartheta , \xi_k \right)&=\frac{1}{N}\left| \sum\limits_{n=1}^{N}e^{j\left[2\pi(n-1)\Delta \left( 1-\xi_k\right) \vartheta   \right]} \right| \nonumber \\
        &=\left|  \frac{\sin\left( N\pi \Delta \left(1- \xi_k\right)\vartheta  \right)}{N \sin\left( \pi \Delta \left(1- \xi_k\right)\vartheta  \right)}\right|.
\end{align}
Then, the EAG over all possible physical directions is
\begin{equation}\label{eq:EAG for PS array}
           \Es_{\rm ps}[g\left( \vartheta,\xi_k \right)]= \frac{1}{2K} \sum\limits_{k=1}^K\int^{1}_{-1}  g\left( \vartheta,\xi_k \right) d \vartheta,
\end{equation}
which can be approximated by a closed-form expression as in the following proposition.
\begin{proposition}\label{lemma:EAG of PS analog beamformer}
The EAG of the PS-based array, i.e., $\Es_{\rm ps}[g\left( \vartheta,\xi_k \right)]$, can be approximated as
\begin{align}\label{eq:approxmated EAG for PS-based array}
   \Es_{\rm ps}[g\left( \vartheta,\xi_k \right)]  \approx \frac{2}{ 3~\text{BSR}} \int_0^{\text{BSR}} \left|{\rm sinc}\left( 4x\right)\right|dx + \frac{1}{3}, 
\end{align}
where ${\rm sinc}(x)\triangleq \frac{\sin(\pi x)}{\pi x}$ and BSR is given as  \eqref{eq:BSR closed-form}. The EAG $\Es_{\rm ps}[g\left( \vartheta,\xi_k \right)]$ monotonically decreases with \text{BSR} and is bounded by $\frac{1}{3}\leq \Es_{\rm ps}[g\left( \vartheta,\xi_k \right)]\leq 1$.
The lower bound is attained as $\text{BSR} \rightarrow \infty$, while the upper bound is achieved as $\text{BSR} \rightarrow 0$.
\end{proposition}
\vspace{-3mm}
 \IEEEproof See Appendix \ref{appd_arraygain_PS}. \qedsymbol

\subsubsection{SW-based Beamformer}\label{sec:SW-based array}

    The normalized array gain achieved by the SW-based analog beamforming vector $\wb = [w_1, \ldots, w_N]^T, w_n \in \left\{0,1\right\}, n = 1, \ldots, N$  is given by   
\begin{align}\label{eq:base matrix for SW-BSR}
  g\left( \wb, \vartheta,\xi_k \right)&=\frac{1}{\sqrt{\left\|\wb \right\|_1}}\left| \wb^H \ab_{\rm r}\left( \theta,f_k \right) \right| \nonumber \\
  &=\frac{1}{\sqrt{N \left\|\wb \right\|_1}}\left|\sum\limits_{n=1}^{N} w_n e^{-j2\pi(n-1)\Delta  \xi_k \vartheta }\right|.
\end{align}
The EAG of the SW-based array can be written as 
\begin{align}\label{eq:EAG of switch bases array}
    \Es_{\rm sw}[g\left( \wb, \vartheta\right)] =  \frac{1}{2K} \sum_{k=1}^K  \int^{1}_{-1} g\left( \wb, \vartheta,\xi_k \right) d\vartheta, 
\end{align}
which is intractable for analysis due to the binary-valued vector $\wb$. Instead, we obtain its approximated closed-form expression detailed in the following proposition.
\begin{proposition}\label{prop:EAG of SW-based array}
    The EAG of the SW-based array, i.e., $\Es_{\rm sw}[g\left( \wb, \vartheta\right)]$, can be approximated as
          \begin{equation}\label{eq:EAG SW expression}
         \Es_{\rm sw}[g\left( \wb, \vartheta\right)] \approx \frac{2}{3}\sqrt{\frac{\left\|\wb \right\|_1}{N}} \leq \frac{2}{3}.
    \end{equation}  
\end{proposition}
 \IEEEproof See Appendix~\ref{appdix:proof of SW EAG}. \qedsymbol

  

With Proposition~\ref{lemma:EAG of PS analog beamformer} and Proposition~\ref{prop:EAG of SW-based array}, we can observe that the EAG of the PS-based beamformer can be smaller than that of the SW-based one when the beam squint effect is severe. For example, when BSR is sufficiently large,  $\Es_{\rm ps}[g\left( \vartheta,\xi_k \right)] \rightarrow \frac{1}{3}$. In contrast, with $\left\|\wb \right\|_1 > \frac{N}{4}$, we obtain $\Es_{\rm sw}[g\left( \wb, \vartheta\right)] >\frac{1}{3}$. This occurs because the PS-based beamformer is highly sensitive to the beam squint effect while the SW-based one is not. Consequently, the relative resilience of the SW-based beamformer to beam squint, compared to the PS-based one, may offer advantages for deploying the SW-HBF architecture over the PS-HBF one in wideband systems. We will further demonstrate this by the numerical results of Section~\ref{sc:simulation results}. 

  
\section{SW-HBF Design for SU-MIMO}\label{sc:SW-HBF in SU-MIMO} 
In this section, we focus on designing efficient SW-HBF schemes to maximize SE in wideband systems.
 
\subsection{Problem Formulation}

Considering the Gaussian signalling, based on signal model \eqref{eq:received signal}, the SE of the $k$-th subcarrier, denoted by $R_k$, is expressed as\cite{lin2019transceiver,li2020dynamic}
\begin{equation}\label{eq:SE for each subcarrier}
  R_k=\log_{2}\left|\Ib_{N_{\rm S}}+\frac{1}{\sigma_{\rm n}^{2}} \Wb_k^{\dagger}  \Hb_k \Fb_k \Fb^H_k \Hb^H_k \Wb_k \right|,
\end{equation}
where we recall that $\Fb_k=\Fb_{\rm RF}\Fb_{\rm BB}[k]$ and $\Wb_k=\Wb_{\rm RF}\Wb_{\rm BB}[k]$. 

We aim at a spectral-efficient design of the considered system by optimizing the HBF precoders and combiners to maximize the average SE of the system \cite{el2014spatially, alkhateeb2016frequency, sohrabi2017hybrid}. Let $\left\{ \Fb_k\right\} \triangleq \{ \Fb_1, \ldots, \Fb_K\}$ and $\left\{\Wb_k\right\} \triangleq \{\Wb_1, \ldots, \Wb_K \}$. The problem can be formulated as
\begin{subequations}\label{eq:problem formulation}
  \begin{align}
    \underset{ \{ \Fb_k \}, \{ \Wb_k\}}{\text{max}} \quad  & \frac{1}{K} \sum_{k=1}^{K} R_k\\
    \text{s.t.} \quad  & \| \Fb_k \|^2_F \leq P_{\rm b},\ \forall k, \\
    &\Fb_{\rm RF}\in \Bcl ^{N_{\rm T} \times N_{\rm RF}}, \rank(\Fb_{\rm RF}) \geq N_{\rm S}, \label{eq: analog precoder constraint}\\
    &\Wb_{\rm RF}\in \Bcl ^{N_{\rm R} \times N_{\rm RF}},\rank(\Wb_{\rm RF}) \geq N_{\rm S} \label{eq: analog combiner constraint}.
\end{align}
\end{subequations}
The joint optimization of the transmit and receive beamformers, i.e., $\left\{ \Fb_k \right\}$ and $\left\{ \Wb_k \right\}$, in problem \eqref{eq:problem formulation} is challenging. It is a non-convex non-smooth mix-integer programming problem due to the binary and rank constraints of the analog beamformers. To overcome the challenges, we first decouple the problem into two subproblems: transmit and receive beamforming design, and then solve them successively. The procedure is elaborated below.


\subsection{Transmit HBF Design}\label{sec:Tx HBF Design}
Assuming an optimal solution to $\Wb_k$ \cite{sohrabi2016hybrid,sohrabi2017hybrid}, the analog and digital precoders, i.e., $\Fb_{\rm RF}$ and $\left\{ \Fb_{\rm BB}[k]\right\}$, can be obtained by solving the following problem 
\begin{subequations}\label{eq:problem formulation at Tx}
\begin{align}
        \underset{\Fb_{\rm RF}, \left\{ \Fb_{\rm BB}[k]\right\}}{\text{max}}\quad  & \frac{1}{K} \sum_{k=1}^K \tilde{R}_k\\
        \text {s.t.} \quad & \eqref{eq: analog precoder constraint}, \nonumber \\
        & \| \Fb_{\rm RF}\Fb_{\rm BB}[k] \|^2_F \leq P_{\rm b},\ \forall k,  \label{eq:coupled analog and digital beamformer}
\end{align}
\end{subequations}
where $\tilde{R}_k$ denotes the achievable rate of the transmit precoder, i.e., 
\begin{equation}\label{eq:Rk tilder precoder}
   \tilde{R}_k=\log_{2}\left|\Ib+\frac{1}{\sigma_{\rm n}^{2}} \Hb_k \Fb_{\rm RF}\Fb_{\rm BB}[k] \Fb_{\rm BB}^H[k]\Fb_{\rm RF}^H \Hb^H_k \right|. 
\end{equation} 
The binary and rank constraints in \eqref{eq: analog precoder constraint} make problem \eqref{eq:problem formulation at Tx} challenging non-convex mixed-integer, and it is difficult to jointly and optimally solve $\{ \Fb_{\rm RF}, \left\{ \Fb_{\rm BB}[k]\right\} \}$. We consider their decoupled designs in the following.

\subsubsection{Digital Precoding Design}
 Define $\Hb_{\rm eff}[k]\triangleq \Hb_k \Fb_{\rm RF}(\Fb_{\rm RF}^H\Fb_{\rm RF})^{-\frac{1}{2}}$. With a given analog precoder, the optimal digital precoder for the $k$-th subcarrier is given as \cite{alkhateeb2016frequency}
\begin{equation}\label{eq:digital precoder}
  \Fb_{\rm BB}[k]=(\Fb_{\rm RF}^H\Fb_{\rm RF})^{-\frac{1}{2}}\Vb_k\Gammab_k^{\frac{1}{2}},
\end{equation}
where $\Vb_k\in \Cs^{N_{\rm RF}\times N_{\rm S}}$ is the matrix whose columns are the $N_{\rm S}$ right-singular vectors corresponding to the $N_{\rm S}$ largest singular values of $\Hb_{\rm eff}[k]$, and $\Gammab_k \in \Cs^{N_{\rm S}\times N_{\rm S}}$ is the diagonal matrix whose diagonal elements are the power allocated to the $N_{\rm S}$ data streams at the $k$-th subcarrier, with $\tr(\Gammab_k)=P_{\rm b}$.

\subsubsection{Analog Precoding Design}\label{sec: analog precoding SU}


Inserting \eqref{eq:digital precoder} into \eqref{eq:Rk tilder precoder}, we obtain an intractable function of $\Fb_{\rm RF}$. Alternatively, we can optimize its tight upper bound considering \cite{sohrabi2017hybrid}
\begin{equation}\label{eq:SE upper bound}
    \tilde{R}_k \leq \log_{2}\left| \Ib + \frac{\gamma}{\sigma_{\rm n}^2} \Fb_{\rm RF}^{\dagger} \Hb_k^H \Hb_k \Fb_{\rm RF}\right|,
\end{equation}
where $\gamma=P_{\rm b}/N_{\rm S}$ represents the power allocated to each data stream, and the relaxation is tight when $N_{\rm S}=N_{\rm RF}$. We note that for moderate and high signal-to-noise-ratio (SNR) regimes, it is asymptotically optimal to adopt the equal power allocation for all streams in each subcarrier \cite{lee2008downlink,sohrabi2017hybrid}.
By defining $\Hb_{\rm t}[k]\triangleq \Hb_k^H \Hb_k$, the problem of designing the analog precoder $\Fb_{\rm RF}$ is reformulated as
\begin{align}\label{eq:analog precoder design}
     \underset{\Fb_{\rm RF}}{\text{max}} \quad & f\left( \Fb_{\rm RF}\right) \triangleq  \frac{1}{K} \sum_{k=1}^{K} \log_{2}\left| \Ib + \frac{\gamma}{\sigma_{\rm n}^2} \Fb_{\rm RF}^{\dagger} \Hb_{\rm t}[k] \Fb_{\rm RF}\right|\\[-10pt]
        \text {s.t.} \quad & \eqref{eq: analog precoder constraint}. \nonumber 
\end{align}
Problem \eqref{eq:analog precoder design} remains challenging due to the binary and rank constraint of $\Fb_{\rm RF}$, and there is no existing solution to it in the literature. A relaxation of the binary variables may cause significant performance loss. Although the optimal solution can be obtained via the ES method, the resultant algorithm requires prohibitively high complexity. For such challenging mixed-integer programming, the tabu search (TS) approach \cite{laguna2018tabu, nguyen2019qr} can find a near-optimal solution with reduced complexity compared to the ES method. Specifically, the TS procedure explores candidates and their neighbors within the feasible region over iterations. In each iteration, the best non-tabu (i.e., unforbidden) neighbor is chosen as the candidate and stored in the tabu list to avoid cycling. As such, the TS method ensures convergence to a local optimum with a sufficiently large number of iterations \cite{laguna2018tabu}. In TS-based methods, the design of the neighbor set plays a pivotal role in balance performance and complexity. A larger neighbor set can generally enable better performance but result in a higher complexity. To efficiently tackle problem \eqref{eq:analog precoder design}, we introduce a PGA-TS-aided analog precoding algorithm. This approach effectively diminishes the size of the neighbor set while maintaining satisfactory performance.  


The idea of the PGA-TS scheme is to first apply the PGA method to find a relaxed but efficient solution in $[0,1]$. Entries that are neither $0$ nor $1$ in the relaxed solution are considered erroneous and require further optimization with the TS procedure. By doing so, the search space of the TS procedure can be significantly diminished, facilitating faster convergence and lower complexity. Specifically, we first relax the rank and binary constraints in problem \eqref{eq:analog precoder design} and consider the following problem
\begin{equation}\label{eq:PGA problem}
 \setlength{\abovedisplayskip}{0pt}
 \setlength{\belowdisplayskip}{0pt}
          \underset{\Fb_{\rm RF}}{\text{max}} \quad f\left( \Fb_{\rm RF}\right),\ \text{s.t.} \quad \Fb_{\rm RF}\in \Fcl,
\end{equation}
where $\Fcl\triangleq\{\Fb_{\rm RF}| \Fb_{\rm RF}(i,j)\in [0,1], \forall i,j \} $.
Problem \eqref{eq:PGA problem} can be efficiently solved using the PGA method, as summarized in Algorithm \ref{alg:PGA algorithm}, where $\nabla_{\Fb_{\rm RF}}f(\Fb_{\rm RF})$ denotes the gradient of $f(\Fb_{\rm RF})$, given by \cite{petersen2008matrix}
  \begin{align}\label{eq:PGA gradient P2P}
  \nabla_{\Fb_{\rm RF}}f(\Fb_{\rm RF})&=\frac{2}{K} \sum_{k=1}^{K} \Ab_k\Fb_{\rm RF}\left(\Fb_{\rm RF}^H\Ab_k\Fb_{\rm RF}\right)^{-1} \nonumber \\
    & \qquad \qquad -2\Fb_{\rm RF}\left(\Fb_{\rm RF}^H\Fb_{\rm RF}\right)^{-1}
\end{align}
with $\Ab_k\triangleq\Ib+\frac{\gamma}{\sigma_{\rm n}^2} \Hb_{\rm t}[k] $. In step 4, the normalized gradient $\nabla \Frf$ is obtained and employed later in step 6, where the step size $\mu$ is determined by the backtracking line search method \cite{bertsekas1997nonlinear} as in step 5, and $[\cdot]_{\Fcl}$ denotes the projection into $\Fcl$. The iterations continue until the procedure satisfies the stopping criteria. Since the objective function is non-decreasing during the iteration, the PGA algorithm is guaranteed to converge to a stationary point of the non-convex problem \eqref{eq:PGA problem}.

\begin{algorithm}[t]
 \small
\caption{ PGA algorithm for problem \eqref{eq:PGA problem}}\label{alg:PGA algorithm}
\LinesNumbered 
\KwOut{$\Fb_{\rm RF}^{\rm pga}$}
Initialize $\alpha \in (0,0.5)$, $ \beta \in (0,1)$, $\Fb_{\rm RF}\in \Fcl$, and $\epsilon$\;
\Repeat{$  |f(\Frf')-f(\Frf)| \leq \epsilon$ }{
Cache the solution $\Frf'\leftarrow \Frf$

  Obtain $\nabla  \Frf =\frac{\nabla_{\Fb_{\rm RF}}f(\Fb_{\rm RF})}{\left\| \nabla_{\Fb_{\rm RF}}f(\Fb_{\rm RF})\right\|_F}$ with \eqref{eq:PGA gradient P2P}.

 Start with $\mu=1$, repeat $\mu \leftarrow \beta \mu$ until 
 $f\left([\Fb_{\rm RF}+\mu\nabla  \Frf]_{\Fcl}\right)> f\left( \Frf\right)+ \alpha \mu \|\nabla \Frf\|_F^2$.
 
 Update $\Frf$ as $\Fb_{\rm RF} \leftarrow [\Fb_{\rm RF}+\mu\nabla  \Frf]_{\Fcl}$.
 }
 Obtain $\Fb_{\rm RF}^{\rm pga}$ as $\Fb_{\rm RF}^{\rm pga}= \Fb_{\rm RF}$.
\end{algorithm}


Observing that the non-integer entries of $\Fb_{\rm RF}^{\rm pga}$ fall into $(0,1)$, we further refine the solution before performing the TS procedure for efficiency. Specifically, define the following refinement function
\begin{equation}\label{eq:definition of refinement function}
   r\left(x\right)\triangleq \begin{cases}
\delta & \text{ if } \left|x-\delta \right|\leq \varepsilon,  \\
 x & {\rm otherwise},
\end{cases},
\end{equation}
where $\delta\in \{0,1\}$ and $\varepsilon>0$ is a threshold rounding an entry to a binary value with an error upper bound $\varepsilon$. Applying $r(\cdot)$ to the entries of $\Fb_{\rm RF}^{\rm pga}$, we obtain a refined solution $\tilde{\Fb}_{\rm RF}^{\rm pga}$, i.e., $\tilde{\Fb}_{\rm RF}^{\rm pga}(i,j)=r\left(\Fb_{\rm RF}^{\rm pga}(i,j)\right), \forall i,j$, in which the entries close to $\delta\in \{0,1\}$ are rounded to $\delta$.

The PGA-aided TS (PGA-TS) algorithm is summarized in Algorithm \ref{alg:APGA-TS algorithm}. In step 1, the initial solution $\tilde{\Fb}_{\rm RF}^{\rm pga}$ is obtained by applying the PGA algorithm and the refinement $r(\cdot)$, as explained above. To facilitate the recovery of the analog precoder, the indices of non-integer entries of $\tilde{\Fb}_{\rm RF}^{\rm pga}$ are stored in the set $\Scl_I$. Then, in the remaining steps, the TS procedure is performed in the reduced search space $\Bcl^{|\Scl_I|}$, where we recall that $\Bcl = \{0, 1\}$. Specifically, let $\qb^{(i)} \in \Bcl^{|\Scl_I|}$ be the candidate of the TS procedure in the $i$-th iteration. The neighbor set $\Ncl(\qb^{(i)})$ is defined as the set consisting of $|\Scl_I|$-dimension vectors that have only one element different from $\qb^{(i)}$. Let $\Nnb \triangleq |\Ncl(\qb^{(i)})|$ denote the size of the neighbor set satisfying $\Nnb \leq |\Scl_I| \leq \Nt\Nrf$. Note that because only the analog precoder entries with indices in $\Scl_I$ are updated in each iteration, the overall analog precoder needs to be recovered as in step 5. Specifically, based on $\Scl_I$, the entries of $\qb^{(i)}$ can be inserted back to the corresponding non-integer position in $\tilde{\Fb}_{\rm RF}^{\rm pga}$ to recover the analog precoder. Here, an analog precoder is only valid if it satisfies the rank constraint \eqref{eq: analog precoder constraint}. Then in step 6, the neighbor set is examined to find the best neighbor $\tilde{\qb}^{(i)\star}$ whose corresponding analog precoder $\tilde{\Fb}_{\rm RF}^{(i)\star}$ yields the largest objective value. Note that the best neighbor has to be not in the tabu list. The best solution is updated if a larger objective value is found, as in steps 7--9. The best neighbor $\tilde{\qb}^{(i)\star}$ is then pushed into the tabu list in step 10 to avoid cycles. Finally, the candidate is updated for the next iteration, as in step 11. This iterative procedure can be terminated once the objective value converges or when the number of iterations exceeds a predefined threshold.


The convergence of Algorithm \ref{alg:APGA-TS algorithm} is guaranteed because the objective values form a nondecreasing sequence over iterations. Furthermore, with the aid of PGA solution and refinement, the size of the neighbor set has decreased from $\Nt\Nr$ to $|\Scl_I|$, which could result in $2^{\frac{\Nt\Nrf}{2}}$ times reduction in the size of the search space based on our numerical simulations. Meanwhile, the PGA solution facilitates the search for a satisfactory local optimum, thereby significantly improving the efficiency of the iterative TS procedure, especially for the considered large-scale MIMO systems.
 
 \begin{algorithm}[t]
 \small
\caption{PGA-TS for designing analog precoder}\label{alg:APGA-TS algorithm}
\LinesNumbered 
\KwOut{$\Fb_{\rm RF}^{\star}$}
Perform the PGA algorithm to obtain $\Fb_{\rm RF}^{\rm pga}$, and
apply the refinement function $r(\cdot)$ (where $\varepsilon=0.1$) to obtain $\tilde{\Fb}_{\rm RF}^{\rm pga}$. Store the index of those non-integer entries of $\tilde{\Fb}_{\rm RF}^{\rm pga}$ in $\Scl_I$.

Set $i \gets 0$ and generate the initial candidate $\qb^{(0)} \in \Bcl^{|\Scl_I|}$, then recover the analog precoder $\Fb_{\rm RF}^{(0)}$ based on $\qb^{(0)}$ and $\Scl_I$. Set $\qb^{\star} \gets \qb^{(0)}, \Fb_{\rm RF}^{\star} \gets \Fb_{\rm RF}^{(0)} $, $\Lcl_{\rm ts} \gets \Lcl_{\rm ts} \cup \qb^{(0)}$, and $\Nnb$.

\While{$i \leq N_{\rm iter}^{\rm max}\; \textbf{and}$  {\rm not converge}}{
Find neighbor set $\Ncl(\qb^{(i)})$ of the candidate $\qb^{(i)}$.

Recover the analog precoder of each neighbor in $\Ncl(\qb^{(i)})$ with the aid of $\Scl_I$, and cache the analog precoder into $\Fcl_{\qb}^{(i)}$ if it satisfies the rank constraint \eqref{eq: analog precoder constraint}.

Find the best neighbor $\tilde{\qb}^{(i)\star}$ in $\Ncl(\qb^{(i)})$ but not in the tabu list $\Lcl_{\rm ts}$ . The according analog precoder of the best neighbor $\tilde{\qb}^{(i)\star}$, denoted as $\tilde{\Fb}_{\rm RF}^{(i)\star}$, achieves the largest objective value $f(\tilde{\Fb}_{\rm RF}^{(i)\star})$ among $\Fcl_{\qb}^{(i)}$.

    \If{$ f(\tilde{\Fb}_{\rm RF}^{(i)\star}) > f(\Fb_{\rm{RF}}^{\star}) $ }
    {Update the best solution: $\Fb_{\rm{RF}}^{\star} \gets \tilde{\Fb}_{\rm RF}^{(i)\star}$.}
    
        Push $\tilde{\qb}^{(i)\star}$ to the tabu list: $\Lcl_{\rm ts} \gets \Lcl_{\rm ts} \cup \tilde{\qb}^{(i)\star}$, and remove the first candidate in $\Lcl_{\rm ts}$ if it is full.
    
    Set $\tilde{\qb}^{(i)\star}$ as the candidate for the next iteration, i.e., $\qb^{(i+1)} \gets \tilde{\qb}^{(i)\star}$.
    
    $i \gets i+1$.
}
\end{algorithm}
\subsection{Receive HBF Design}\label{sec:Rx HBF Design}
For a given analog combiner $\Wb_{\rm RF}$, the optimal digital combiner of each subcarrier is the MMSE solution \cite{el2014spatially,sohrabi2017hybrid}
 \begin{equation}\label{eq:optimal baseband combiner}
   \Wb_{\rm BB}[k]=\left(\Jb_k\Jb^H_k+\sigma_{\rm n}^2\Wb_{\rm RF}^H\Wb_{\rm RF}\right)^{-1}\Jb_k,
 \end{equation}
where $\Jb_k\triangleq \Wb_{\rm RF}^H \Hb_k\Fb_k$. With \eqref{eq:optimal baseband combiner}, the problem of analog combiner design can be formulated as 
 \begin{align}\label{eq:analog combiner problem}
    \underset{\Wb_{\rm RF}}{\text{max}} & \quad \frac{1}{K}\sum_{k=1}^{K}\log_2\left|\Ib+\frac{1}{\sigma_{\rm n}^2}\Wb_{\rm RF}^{\dagger}\Tb_k\Wb_{\rm RF}\right| \\[-2pt]
        \text {s.t.}  &\quad \eqref{eq: analog combiner constraint}, \nonumber 
 \end{align}
 where $\Tb_k\triangleq \Hb_k\Fb_k\Fb^H_k\Hb^H_k$. It is observed that the problems of analog precoding and combining designs, i.e., problems \eqref{eq:analog precoder design} and \eqref{eq:analog combiner problem}, have a similar mathematical structure. Therefore, Algorithm \ref{alg:APGA-TS algorithm} can be utilized to solve problem \eqref{eq:analog combiner problem}. We omit to apply the algorithm for the analog combiner design due to the limited space. Instead, we summarize the overall proposed SW-HBF design for both the Tx and Rx sides and the complexity analysis next.
 \subsection{Overall SW-HBF Design and Complexity Analysis}
\subsubsection{Overall SW-HBF Design} Our proposed SW-HBF scheme is summarized in Algorithm \ref{alg:proposed SW-HBF scheme}, in which, $\Fb_{\rm RF},\{\Fb_{\rm BB}[k]\}$ and $\Wb_{\rm RF},\{\Wb_{\rm BB}[k]\}$ are solved sequentially. Here, while the analog beamforming matrices $\Fb_{\rm RF}$ and $\Wb_{\rm RF}$ are obtained with Algorithm \ref{alg:APGA-TS algorithm}, the digital ones $\{\Fb_{\rm BB}[k]\}$ and $\{\Wb_{\rm BB}[k]\}$ admit closed-form solutions, as shown in \eqref{eq:digital precoder} and \eqref{eq:optimal baseband combiner}. 

 \begin{algorithm}[hbpt]
\small
\LinesNumbered 
\KwOut{$\Fb_{\rm RF},\Wb_{\rm RF} $, $\Fb_{\rm BB}[k],\Wb_{\rm BB}[k], \forall k$}

Obtain $\Fb_{\rm RF}$ by solving problem \eqref{eq:analog precoder design} with Algorithm \ref{alg:APGA-TS algorithm}.

Obtain $\Fb_{\rm BB}[k],\; \forall k$  by \eqref{eq:digital precoder}.

Obtain $\Wb_{\rm RF}$ by solving problem \eqref{eq:analog combiner problem} with modified Algorithm  \ref{alg:APGA-TS algorithm}.

Obtain $\Wb_{\rm BB}[k],\; \forall k$  by \eqref{eq:optimal baseband combiner}.
\caption{Proposed SW-HBF scheme}\label{alg:proposed SW-HBF scheme}

\end{algorithm}

\subsubsection{Complexity Analysis}\label{sec:complexity analysis MIMO}
We first evaluate the complexity of obtaining the analog beamformer with Algorithm \ref{alg:APGA-TS algorithm}. Considering $N_{\rm RF} \ll N_{\rm T}$, the complexity to compute the objective function $ f\left( \cdot \right) $ is given as $\Ocl\left( KN_{\rm RF}\left(2N_{\rm T}^2+4N_{\rm T} N_{\rm RF}+N_{\rm RF}^2\right) \right) \approx \Ocl\left( 2KN_{\rm RF}N_{\rm T}^2\right)$, which are mainly caused by Moore–Penrose inversion and matrix multiplications. In Algorithm \ref{alg:APGA-TS algorithm}, the complexity of the PGA algorithm is estimated as $\Ocl\left(N_{\rm pga}KN_{\rm RF}\left(4N_{\rm T}^2+4N_{\rm T} N_{\rm RF}+N_{\rm RF}^2\right) \right) \approx  \Ocl\left(4N_{\rm pga}KN_{\rm RF}N_{\rm T}^2 \right)$, where $N_{\rm pga}$ denotes the number of iterations. Considering that Algorithm \ref{alg:APGA-TS algorithm} requires to compute the objective function $\Nnb$ times in each iteration, the complexity of Algorithm \ref{alg:APGA-TS algorithm} is given by $\Ocl\left(c KN_{\rm RF}N_{\rm T}^2\right)$, where $c\triangleq 4N_{\rm pga}+2N_{\rm iter}^{\rm max}\Nnb$.
Note that $\Nnb$ satisfying $\Nnb \leq |\Scl_I| \leq \Nt\Nrf$ can be predefined to tradeoff the complexity and performance of Algorithm \ref{alg:APGA-TS algorithm}.
 By replacing $N_{\rm T}$ with $N_{\rm R}$, the complexity of obtaining the analog combiner $\Wb_{\rm RF}$ can be derived similarly. In addition, the complexity of steps 2 and 4 of Algorithm \ref{alg:proposed SW-HBF scheme} is approximate $\Ocl\left(2KN_{\rm T}N_{\rm R}\left(N_{\rm T} +N_{\rm R}\right) \right)$, which is mainly due to matrix multiplications. Therefore, the overall complexity of Algorithm \ref{alg:proposed SW-HBF scheme} is estimated as $\Ocl\left(c KN_{\rm RF}N_{\rm T}^2\right)
+\Ocl\left(c KN_{\rm RF}N_{\rm R}^2\right)
+ \Ocl\left(2KN_{\rm T}N_{\rm R}\left(N_{\rm T} +N_{\rm R}\right) \right) =\Ocl\left((c\Nrf+2\Nt)K\Nr^2+ (c\Nrf+2\Nr)K\Nt^2\right) $, where we assume that $N_{\rm pga}$, $N_{\rm iter}^{\rm max}$, and $\Nnb$ in the precoder design are the same as those in the combiner design. In contrast, the complexity of an ES method is approximately $2^{\Nt\Nrf}\Ocl\left(KN_{\rm RF}N_{\rm T}^2\right)+2^{\Nr\Nrf}\Ocl\left(KN_{\rm RF}N_{\rm R}^2\right) $. Hence, the proposed method has a relatively very low complexity. We note that most of the complexity of the proposed method comes from the computations of the objective values \( f(\cdot) \) associated with the neighbor points to find the best one. These can be performed in parallel to reduce runtime. To further reduce the complexity, small $\Nnb$ can be used, and early termination can be employed during the neighbor search. However, these approaches cause performance loss. We will demonstrate this in Section~\ref{sec:simulations for SU-MIMO}.  

\section{Hybrid Precoding Design for MU-MISO}\label{sec:SW-HBF for MU-MISO}
We now consider the SW-based hybrid precoding design in a MU-MISO OFDM system where a base station (BS) with $\Nt$ antennas and $\Nrf$ RF chains serves $\Nu$ single-antenna users. The transmitted signal vector at the $k$-th subcarrier can be written as
\begin{equation}
    \xb[k]=\sum\limits_{i=1}^{\Nu} \Frf \fb_{{\rm BB}_i} [k] s_i[k],
\end{equation}
where $\Frf \in \Bcl^{\Nt\times \Nrf}$ is the analog precoder, $\fb_{{\rm BB}_i} [k] \in \Cs^{\Nrf}$ and $s_i[k]$ are the digital precoder and data symbol intended for user $i$ at subcarrier $k$, respectively, with $\Es[|s_i[k]|^2]=1,\forall i,k$. Let $\hb_m \in \Cs^{\Nt}$ be the channel between the BS and the user $m$. The $m$-th user receives $y_m[k]=\hb_m^H \xb[k]+z_m[k]$, where $z_m[k]$ represents the AWGN. The achievable rate of the user $m$ at subcarrier $k$ can be expressed as
\begin{equation}\label{eq:sum rate expression in MISO}
    R_m[k]=\log_2\left(1+\frac{\left|\hb_m^H[k]\Frf \fb_{{\rm BB}_m}[k] \right|^2}{\sum\limits_{i\neq m}^{\Nu}\left| \hb_m^H[k]\Frf \fb_{{\rm BB}_i}[k] \right|^2 + \sigma_{\rm n}^2} \right),
\end{equation}
where $\sum\limits_{i\neq m}^{\Nu}\left| \hb_m^H[k]\Frf \fb_{{\rm BB}_i}[k] \right|^2$ represents the interference caused by the data intended for other users. Define $\{\Fbb[k]\}\triangleq \{\Fbb[k], k=1,\ldots, K\}$ where $\Fbb[k] \triangleq [\fb_{{\rm BB}_1} [k],\ldots,\fb_{{\rm BB}_{\Nu}} [k] ]$. The hybrid precoding design is formulated as the following average weighted sum rate maximization problem: 
\begin{subequations}\label{pb:hybrid precoding in Multiuser cases}
  \begin{align}
    \underset{ \Frf, \{ \Fbb [k]\}}{\text{max}} \quad  & \frac{1}{K} \sum_{k=1}^{K} \sum_{m=1}^{\Nu} w_m R_m[k]\\
    \text{s.t.} \quad  & \| \Frf \Fbb[k] \|^2_F \leq P_{\rm b},\ \forall k,\label{eq: power constraint multiuser cases} \\
    &\Fb_{\rm RF}\in \Bcl ^{N_{\rm T} \times N_{\rm RF}}, \rank(\Fb_{\rm RF}) \geq \Nu, \label{eq: analog precoder constraint multiuser cases}
\end{align}
\end{subequations}
where $w_m$ is the weight associated with the $m$-th user. The non-convex and mixed-integer nature of problem \eqref{pb:hybrid precoding in Multiuser cases} renders it challenging to solve. Moreover, the interference term involves repeatedly computation for all $\Nu$ users over all $K$ subcarriers, which further complicates the problem. To overcome these challenges, we propose a two-step method, where we first jointly design the baseband precoders and the relaxed analog beamforming matrix, and then obtain the analog beamforming matrix based on the relaxed one. Specifically, in the first step, we solve problem
\begin{subequations}\label{pb:relaxed HBF design in MU}
  \begin{align}
    \underset{ \Frf, \{ \Fbb [k]\}}{\text{max}} \quad  & \frac{1}{K} \sum_{k=1}^{K} \sum_{m=1}^{\Nu} w_m R_m[k]\\
    \text{s.t.} \quad  & \eqref{eq: power constraint multiuser cases}, \nonumber \\
    & \Frf \in \Fcl, \label{eq:relaxed analog BF constraint MU}
\end{align}
\end{subequations}
which is relaxed from \eqref{pb:hybrid precoding in Multiuser cases}. Here, we recall that $\Fcl\triangleq \{\Fb_{\rm RF}| \Fb_{\rm RF}(i,j)\in [0,1], \forall i,j \} $. Then, in the second step, we leverage Algorithm \ref{alg:APGA-TS algorithm} to obtain $\Frf$ satisfying \eqref{eq: analog precoder constraint multiuser cases} based on the solution to problem \eqref{pb:relaxed HBF design in MU}. 

Despite the relaxation, problem \eqref{pb:relaxed HBF design in MU} is still non-convex, and the analog and digital precoders are still highly coupled. We tackle this by resorting to the fractional programming (FP) theory \cite{shen2018fractional} and transforming it into an equivalent but more tractable expression with the following proposition.

\begin{proposition}
\label{prop:FP equivalent expression}
The objective function is equivalent to the maximization of
\begin{align}\label{eq:fq}
    \hspace{-10mm}&f_q\left(\Frf, \{ \Fbb [k], \rb[k], \qb[k]\}_{k=1}^K\right)= \frac{G}{K\ln2}\nonumber \\
    &+\frac{1}{K} \sum_{(k,m)}  w_m \log_2(1+r_m[k]) -\frac{1}{K\ln 2} \sum_{(k,m)}w_m r_m[k]
\end{align}
with respect to $\{ \rb[k], \qb[k]\}_{k=1}^K$, where $\rb[k]\in \Rs^{\Nu},\forall k$ and $\qb[k] \in \Cs^{\Nu}, \forall k$ are introduced auxiliary variables, and 
%
\begingroup
\allowdisplaybreaks
\begin{align}\label{eq:G}
   &G \triangleq \sum_{(k,m)}2\sqrt{w_m(r_m[k]+1)}\Re\{q_m[k]^* \hb_m^H[k]\Frf \fb_{{\rm BB}_m}[k]\} \nonumber \\[-5pt]
   & \quad -\sum_{(k,m)} |q_m[k]|^2 \left(\sum\limits_{i} \left| \hb_m^H[k]\Frf \fb_{{\rm BB}_i}[k] \right|^2 + \sigma_{\rm n}^2 \right).
\end{align}
\endgroup
Here, $r_m[k]$ and $q_m[k]$ are the $m$-th elements of $\rb[k]$ and $\qb[k]$, respectively. The optimal solution to $r_m[k]$ and $q_m[k]$ are given as
   \begingroup
\allowdisplaybreaks
    \begin{align}
        & r_m^{\star} [k] = \frac{\left|\hb_m^H[k]\Frf \fb_{{\rm BB}_m}[k] \right|^2}{\sum\limits_{i\neq m}^{\Nu}\left| \hb_m^H[k]\Frf \fb_{{\rm BB}_i}[k] \right|^2 + \sigma_{\rm n}^2} \label{eq:optimal rmk},\\[-5pt]
        &q_m^{\star}[k]=\frac{\sqrt{w_m(r_m[k]+1)}\hb_m^H[k]\Frf \fb_{{\rm BB}_m}[k]}{\sum\limits_{i} \left| \hb_m^H[k]\Frf \fb_{{\rm BB}_i}[k] \right|^2 + \sigma_{\rm n}^2}. \label{eq:optimal qmk}
    \end{align}
\endgroup
\end{proposition}
\vspace{-2mm}
 \IEEEproof See Appendix \ref{sec:appendix B}. \qedsymbol

Proposition \ref{prop:FP equivalent expression} shows that the solutions of \eqref{pb:relaxed HBF design in MU} can be obtained as
\begin{subequations}\label{pb:alternative form in Multiuser cases}
  \begin{align}
   & \{ \Frf^{\star}, \{ \Fbb^{\star}[k],\rb^{\star}[k], \qb^{\star}[k]\}_{k=1}^K\}=\nonumber \\
    &\qquad \arg\max \quad   f_q(\Frf, \{ \Fbb [k], \rb[k], \qb[k]\}_{k=1}^K)\\
    &\qquad \qquad \text{s.t.} \quad \eqref{eq: power constraint multiuser cases}~\text{and}~\eqref{eq:relaxed analog BF constraint MU} \nonumber.
\end{align}
\end{subequations}
Define $\{\rb[k]\}\triangleq \{\rb[k],\; k=1,\ldots, K \}$ and $\{\qb[k]\}\triangleq \{\qb[k],\; k=1,\ldots, K \}$.  It is observed that we can iteratively update $\Frf$, $\{\Fbb[k]\}$, $\{\rb[k]\}$ and $\{\qb[k]\}$ to solve problem \eqref{pb:alternative form in Multiuser cases}. Since the optimal solutions to $\{\rb[k]\}$ and $\{\qb[k]\}$ are given in \eqref{eq:optimal rmk} and \eqref{eq:optimal qmk}, respectively, we aim to develop the solution to $\Frf$ and $\{\Fbb[k]\}$ next. 

\subsection{Analog Precoder Design}
With other variables fixed, the design of the analog preocder $\Frf$ can be expressed as
\begin{equation}\label{pb:analog BF design MU}
    \Frf^{\star} =\arg\max \; g\left( \Fb_{\rm RF}\right) \quad \text{s.t.} \quad \eqref{eq:relaxed analog BF constraint MU},
\end{equation}
with $g\left( \Fb_{\rm RF}\right)\triangleq 2\Re\{\tr \left(\Xib\Frf^H\right)\}-\sum\limits_{k=1}^K \tr\left( \Frf^H\Yb_k\Frf\Zb_k\right)$, where 
\begingroup
\allowdisplaybreaks
\begin{align}
    &\Xib\triangleq \sum_{(k,m)}\sqrt{w_m(r_m[k]+1)}q_m[k] \hb_m[k] \fb_{{\rm BB}_m}[k]^H, \\
    &\Zb_k\triangleq  \sum\limits_{i=1}^{\Nu}\fb_{{\rm BB}_i}[k] \fb_{{\rm BB}_i}[k]^H,\forall k, \\ 
    &\Yb_k\triangleq  \sum\limits_{i=1}^{\Nu} |q_i[k]|^2\hb_i[k]\hb_i^H[k], \forall k. 
\end{align}
\endgroup
 With the gradient of $g(\Frf)$ given as $\nabla_{\Frf} g\left( \Frf\right)=2\Xib-2\sum_{k} \Yb_k\Frf \Zb_k$, we can perform the PGA method similar to Algorithm \ref{alg:PGA algorithm} to efficiently find an optimal solution $\Fb_{\rm RF}^{\star}$ to the convex problem \eqref{pb:analog BF design MU}.
\subsection{Digital Precoder Design}\label{pb:digital precoder design in Multiuser cases}
For given $\Frf$ and $\{\rb[k], \qb[k]\}_{k=1}^K$, the design of the digital precoder is formulated as
\begin{equation}\label{pb:digital BF design MU}
    \{ \Fbb^{\star}[k]\}_{k=1}^K =\arg\max \; G \quad \text{s.t.} \quad \eqref{eq: power constraint multiuser cases},
\end{equation}
where $G$ is given in \eqref{eq:G}.
The optimal solution of problem \eqref{pb:digital BF design MU} can be determined by the Lagrangian multiplier method. Specifically, the Lagrangian function is expressed as
\begin{equation}
    L_G=G+ \sum\limits_{k=1}^K \eta_k\left( P_{\rm b}- \|\Frf \Fbb[k]\|^2_F\right),
\end{equation}
where $\eta_k\geq 0, \forall k$ are introduced Lagrangian multipliers. With the first-order condition of optimality, we can derive the optimal digital precoder of subcarrier $k$ as
\begin{equation}\label{eq:CF digital BF compact MU}
    \Fbb[k]^{\star}=\Cb_k^{-1}\Frf^H \Hb[k] \Db_k, \forall k,
\end{equation}
where $\Cb_k\triangleq \Frf^H \left( \Yb_k + \eta_k \Ib\right)\Frf$, $\Hb[k]\triangleq [\hb_1[k],\ldots,\hb_{\Nu}[k]]$ and $\Db_k\triangleq\diag\left(t_1[k],\ldots, t_{\Nu}[k]\right)$ with $t_i[k]\triangleq q_i[k]\sqrt{w_i(r_i[k]+1)} $. The optimal Lagrangian multiplier $\eta_k$ can be easily found by the bisection search method based on the complementary slackness, i.e., $\eta_k\left( P_{\rm b}- \|\Frf \Fbb[k]\|^2\right)=0,\forall k$.

The overall proposed FP-based algorithm for solving problem \eqref{pb:relaxed HBF design in MU} is summarized in Algorithm \ref{alg:FP relaxed HBF design}, where the initial $\Frf$ can be randomly set to satisfy \eqref{eq:relaxed analog BF constraint MU}. Then, we can construct the effective baseband channel $\Bar{\Hb}[k]\triangleq \Frf^H \Hb[k]\Hb^H[k]\Frf$. By performing an SVD $\Bar{\Hb}[k]=\Bar{\Ub}[k]\Bar{\Sb}[k] \Bar{\Vb}[k]$ for each subcarrier, the digital precoder is initialized as $\Fbb[k]=\Bar{\Vb}[k]_{:,1:\Nu}$, whose columns are the $\Nu$ right singular vectors of $\Bar{\Hb}[k]$. Finally, $\Fbb[k]$ is normalized to ensure the transmit power constraint, i.e., $\Fbb[k]=\frac{\sqrt{P_{\rm b}}\Fbb[k]}{\|\Frf \Fbb[k] \|}$. We note that Algorithm \ref{alg:FP relaxed HBF design} converges to at least a stationary point since the objective function \eqref{eq:fq} is monotonically nondecreasing after each iteration \cite{shen2018fractional}. 
\begin{algorithm}[hbpt]
\small
\caption{ FP-based algorithm for problem \eqref{pb:relaxed HBF design in MU}}\label{alg:FP relaxed HBF design}
\LinesNumbered 
\KwOut{$\Fb_{\rm RF}^{\star}\in \Fcl, \{ \Fbb^{\star}[k]\}_{k=1}^K$}
Initialize $\Fb_{\rm RF}\in \Fcl$ and $\{ \Fbb [k]\}_{k=1}^K$ satisfying \eqref{eq: power constraint multiuser cases}\;
\Repeat{{\rm stopping criteria is satisfied}}{
  Update $\{ \rb^{\star}[k]\}_{k=1}^K $ by \eqref{eq:optimal rmk}.

 Update $\{ \qb^{\star}[k]\}_{k=1}^K $ by \eqref{eq:optimal qmk}.

Update $\Fb_{\rm RF}^{\star}$ as the solution to problem \eqref{pb:analog BF design MU} by the PGA algorithm.
 
  Update $ \{ \Fbb^{\star}[k]\}_{k=1}^K$ by \eqref{eq:CF digital BF compact MU}.
 }
 \end{algorithm}

With $ \{ \Fbb^{\star}[k]\}_{k=1}^K$ and $\Fb_{\rm RF}^{\star}\in \Fcl$ obtained by Algorithm \ref{alg:FP relaxed HBF design}, we can perform the TS procedure to find an efficient solution to the following problem 
\begin{subequations}\label{pb:TS target MU}
  \begin{align}
    \underset{ \Frf}{\text{max}} \quad  & \frac{1}{K} \sum_{k=1}^{K} \sum_{m=1}^{\Nu} w_m R_m[k] \label{eq:TS target function}\\
    \text{s.t.} \quad  &  \eqref{eq: analog precoder constraint multiuser cases}. \nonumber
\end{align}
\end{subequations}
The above problem can be solved with Algorithm \ref{alg:APGA-TS algorithm} by replacing its target objective function and PGA-enabled initialization with \eqref{eq:TS target function} and $\Fb_{\rm RF}^{\star}\in \Fcl$, respectively. 
We note that applying the PGA-TS algorithm here does not require performing the PGA initialization since $\Fb_{\rm RF}^{\star}\in \Fcl$ is already given. We omit the detailed subsequent steps to avoid redundancy. Instead, we summarize the two-step algorithm for solving problem \eqref{pb:hybrid precoding in Multiuser cases} in Algorithm \ref{alg:two-step MU algorithm} and present the complexity analysis next.
\begin{algorithm}[hbpt]
\small
\caption{ Two-step algorithm for problem \eqref{pb:hybrid precoding in Multiuser cases}}\label{alg:two-step MU algorithm}
\LinesNumbered 
\KwOut{$\Fb_{\rm RF}, \{ \Fbb [k]\}_{k=1}^K$}
Initialize $\Fb_{\rm RF}\in \Fcl$ and $\{ \Fbb [k]\}_{k=1}^K$ satisfying \eqref{eq: power constraint multiuser cases}.

Obtain $ \{ \Fbb^{\star}[k]\}_{k=1}^K$ and $\Fb_{\rm RF}^{\star}\in \Fcl$ by Algorithm \ref{alg:FP relaxed HBF design}.

Obtain $\Fb_{\rm RF}$ satisfying \eqref{eq: analog precoder constraint multiuser cases} by the modified Algorithm \ref{alg:APGA-TS algorithm} based on $ \{ \Fbb^{\star}[k]\}_{k=1}^K$ and $\Fb_{\rm RF}^{\star}\in \Fcl$.
\end{algorithm}

\subsection{Complexity Analysis}\label{sec:complexity analysis for two-setp MU alg}
We first evaluate the complexity of Algorithm \ref{alg:FP relaxed HBF design} for the typical setting $\Nrf,\Nu \ll \Nt$ in downlink MU-MISO systems. The complexities of computing $\{ \rb^{\star}[k]\}_{k=1}^K $ and $\{ \qb^{\star}[k]\}_{k=1}^K $ are approximately the same, which is $\Ocl(K\Nu^2\Nt\Nrf)$. To obtain $\Frf^{\star}$ in step 5, the matrices $\Xib$, $\{ \Yb_k\}_{k=1}^K$, and $\{ \Zb_k\}_{k=1}^K$ are required but only computed once before the PGA procedure. The resultant complexity is approximate $\Ocl (K\Nu\Nt^2)$. On the other hand, the complexity of the PGA algorithm is approximately $\Ocl(K\Nrf\Nt^2)$, which is mainly due to computing the gradient $\nabla_{\Frf} g\left( \Frf\right)$. Therefore, the overall complexity of step 5 is $\Ocl\left(K(N_{\rm pga}\Nrf+\Nu)\Nt^2\right)$, where $N_{\rm pga}$ represents the number of iterations of the PGA algorithm. The complexity of step 6 is approximate $\Ocl\left( K\Nrf^3 +K\Nrf(\Nrf+\Nu)\Nt \right)$ due to matrix multiplications. Hence, the overall complexity of Algorithm \ref{alg:FP relaxed HBF design} can be approximated as $\Icl_{\rm FP}\Ocl\left( K(N_{\rm pga}\Nrf+2\Nu)\Nt^2 \right)$, where $\Icl_{\rm FP}$ denotes the number of iterations in Algorithm \ref{alg:FP relaxed HBF design}.  

The complexity of step 3 in Algorithm \ref{alg:two-step MU algorithm} is given as $\Icl_{\rm TS}\Ocl(\Icl_{\rm N}K\Nrf\Nt\Nu^2)$, which comes mainly from computing the objective values \eqref{eq:TS target function} during the neighbor search. Here, $\Icl_{\rm N}$ denotes the size of the neighbor set satisfying $ \Icl_{\rm N}\leq N_{\rm T}N_{\rm RF}$, and $\Icl_{\rm TS}$ is the number of iterations of the PGA-TS algorithm. Therefore, the overall complexity of Algorithm \ref{alg:two-step MU algorithm} is approximately $\Ocl\left(\Icl_{\rm FP}K(N_{\rm pga}\Nrf+2\Nu)\Nt^2+ \Icl_{\rm TS}\Icl_{\rm N}K\Nrf\Nt\Nu^2\right)$. Finally, we summarize the complexities of proposed algorithms in Table~\ref{tb:Complexity of proposed algorithms},, where we recall $c\triangleq 4N_{\rm pga}+2N_{\rm iter}^{\rm max}\Nnb$. 
\begin{table}[htbp]
\small
\renewcommand\arraystretch{1.5}
\centering
\vspace{-0.2cm}
\caption{Computational Complexities of the proposed algorithms. }
\vspace{-0.2cm}
\label{tb:Complexity of proposed algorithms}
    \begin{tabular}{|c|c|} 
    \hline 
      Item & Computational complexity\\
    \hline  
    Algorithm 1 & $\Ocl\left(4N_{\rm pga}KN_{\rm RF}N_{\rm T}^2 \right) $    \\
    \hline 
    Algorithm 2 &  $\Ocl\left(c KN_{\rm RF}N_{\rm T}^2\right)$  \\
        \hline 
    Algorithm 3 &  \makecell[c]{$\Ocl\big((c\Nrf+2\Nt)K\Nr^2$ \\$ \qquad +  (c\Nrf+2\Nr)K\Nt^2\big)$ }   \\
    \hline 
     Algorithm 4  &  $\Ocl\left( \Icl_{\rm FP}K(N_{\rm pga}\Nrf+2\Nu)\Nt^2 \right)$\\
    \hline
     Algorithm 5  & \makecell[c]{$\Ocl\left(\Icl_{\rm FP}K(N_{\rm pga}\Nrf+2\Nu)\Nt^2 \right.$\\$\left.+\Icl_{\rm TS}\Icl_{\rm N}K\Nrf\Nt\Nu^2\right)$}\\

    \hline
  
    \end{tabular}   
\end{table}
\vspace{-5mm}
\section{Simulation Results}\label{sc:simulation results}
In this section, we provide numerical results to evaluate the performance of the proposed SW-HBF schemes. In the simulations, unless otherwise stated, we set $L_{\rm p}=4, \Delta=\frac{1}{2},f_{\rm c}=300~{\rm GHz},K=128$, and $N_{\rm S}=N_{\rm RF}=4$ for SU-MIMO ($\Nu=\Nrf=4$ for MU-MISO). The AoA/AoDs are uniformly distributed over $\left[-\frac{\pi}{2},\frac{\pi}{2}\right]$, and the pulse shaping filter is modeled by the raised cosine function\cite{alkhateeb2016frequency} with a roll-off factor $1$.
The path delay is uniformly distributed in $[0,(\varsigma -1)T_{\rm s}]$ where $\varsigma$ is the cyclic prefix length, given by $\varsigma=K/4$ according to 802.11ad specification \cite{alkhateeb2016frequency}. Other parameters are specified as given in each figure. The SNR is defined as SNR$~\triangleq \frac{P_{\rm b}}{\sigma_{\rm n}^2}$.  All reported results are averaged over $10^3$ channel realizations.

\subsection{Convergence of the Proposed Algorihthms}
    \begin{figure}[t]
        \centering
        \hspace{-2mm}
      \subfigure[]
         {\label{fig:convergence_all}\includegraphics[width=0.24\textwidth]{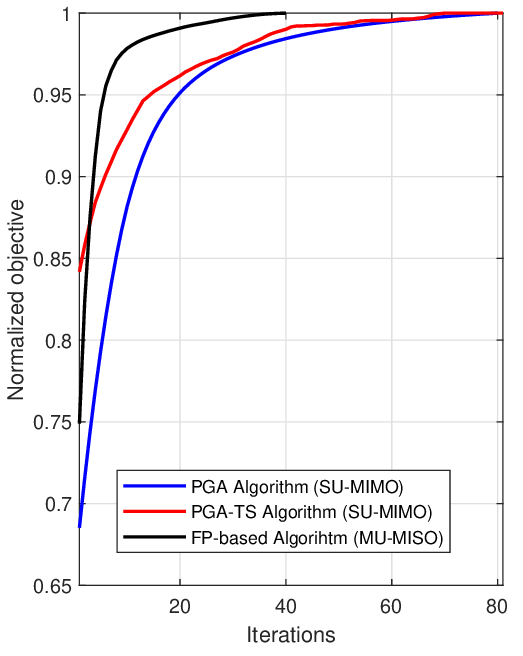}} \hspace{-2mm}
        \subfigure[]
        {\label{fig:convergence_PGATS_Nnb} \includegraphics[width=0.24\textwidth]{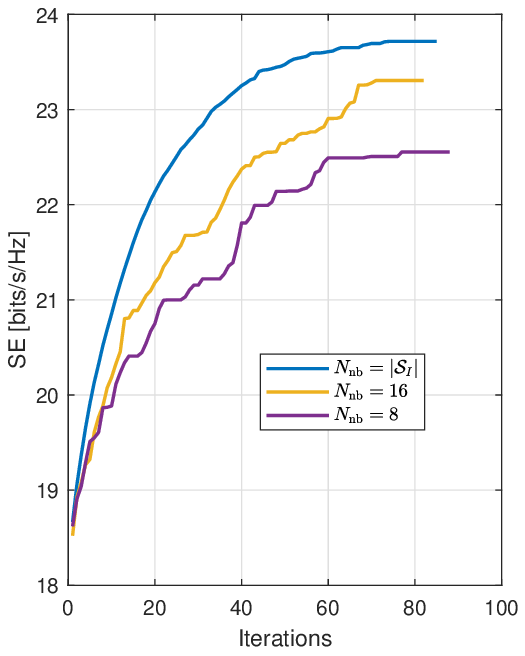}}
        \captionsetup{font={small}}
        \vspace{-2mm}
        \caption{ Convergence of the proposed algorithms with $N_{\rm T}=N_{\rm R}=64,N_{\rm S}=N_{\rm RF}=4$ for SU-MIMO and $N_{\rm T}=64,\Nu=4$ for MU-MISO. In Fig.~(a), the size of the neighbor set for the PGA-TS algorithm is set as $\Nnb=|\Scl_I|$. In Fig.~(b), we show its convergence with $\Nnb\in \{8,16,|\Scl_I|\}$.}
        \label{fig:Convergence property} 
        \vspace{-2mm}
  \end{figure}
Fig.\ \ref{fig:convergence_all} shows the convergence of Algorithms \ref{alg:PGA algorithm}, \ref{alg:APGA-TS algorithm}, and \ref{alg:FP relaxed HBF design} with the normalized objective values. In simulations, we set the termination conditions for Algorithm~\ref{alg:APGA-TS algorithm} with $\Nnb=|\Scl_I|$, $N_{\rm end}=10$ and $N_{\rm iter}^{\rm max}=200$. As a result, the PGA-TS algorithm is terminated if the objective value remains unchanged over $N_{\rm end}$ consecutive iterations, or if the number of iterations exceeds $N_{\rm iter}^{\rm max}$. It can be observed that the PGA algorithm, PGA-TS algorithm, and the FP-based algorithm can converge after around $40$, $70$, and $70$ iterations, respectively. In Fig.~\ref{fig:convergence_PGATS_Nnb}, we show the convergence of the PGA-TS algorithm with $\Nnb\in \{8,16,|\Scl_I|\}$. The smaller neighbor set is obtained by randomly selecting $\Nnb$ neighbor points from the full neighbor set. It can be observed that a smaller size of the neighbor set leads to a lower SE, though it reduces computational complexity. Additionally, examining fewer neighbor points lowers the likelihood of discovering a better solution, resulting in the observed staircase pattern in SE improvement over iterations. In subsequent figures, we set $\Nnb=|\Scl_I|$ unless specified otherwise.

\subsection{Performance of SW-HBF in SU-MIMO Systems}\label{sec:simulations for SU-MIMO}
In Fig.\ \ref{fig:SE v.s. Nt es}, we compare the SE attained by the proposed SW-HBF schemes with those achieved by the random and ES schemes with $N_{\rm S}=N_{\rm RF}=2$ and various number of antennas ($N_{\rm T}=N_{\rm R}$). Specifically, in the random SW-HBF scheme, the connections between the RF chains and antennas are randomly generated, i.e., the entries of the analog beamformers are randomly set as $0$ or $1$. In contrast, the ES scheme examines all possible combinations of those connections to search for the best one, which requires a long run time and extremely high complexity. Therefore, we consider a system of moderate size with up to 12 antennas. It can be seen from Fig.\ \ref{fig:SE v.s. Nt es} that the proposed PGA-TS algorithm performs close to the ES while requiring significantly lower complexity than the latter, verifying its efficiency. In the subsequent results, we will omit the ES scheme due to its extremely high complexity and focus on large-sized systems. 

  \begin{figure*}[htbp]
        \centering
        \hspace{-5mm}
        \subfigure[]
        {\label{fig:SE v.s. Nt es} \includegraphics[width=0.33\textwidth]{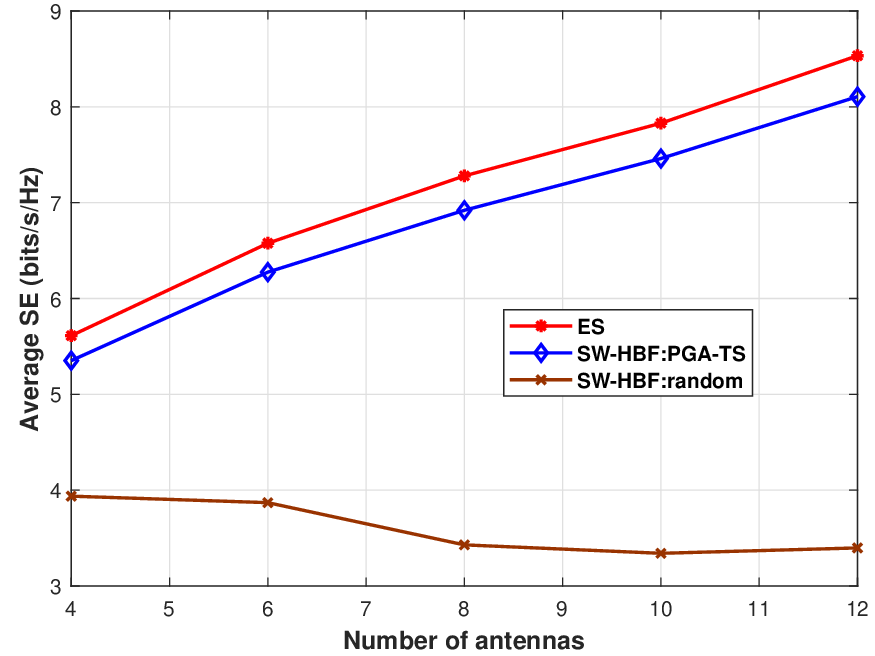}}
      \subfigure[]
         {\label{fig:SE vs SNR}\includegraphics[width=0.33\textwidth]{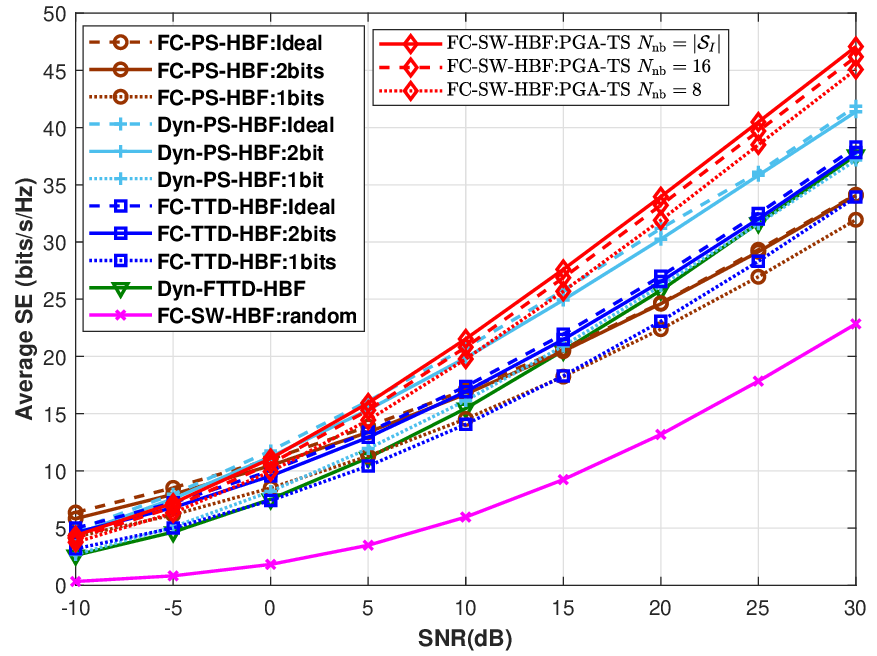}} 
        \subfigure[]
        {\label{fig:SE v.s. Ns} \includegraphics[width=0.33\textwidth]{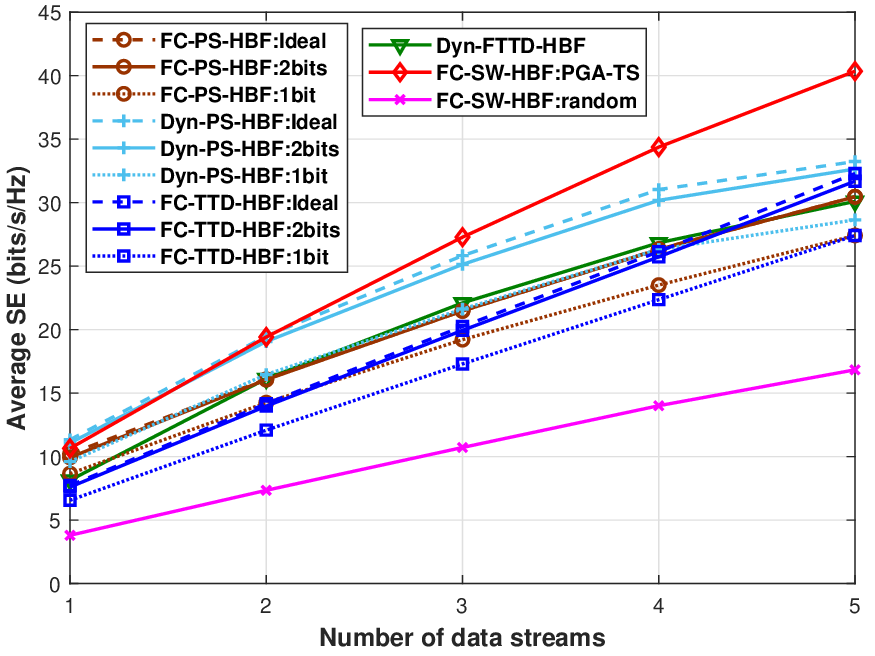}}
        \captionsetup{font={small}}
        \vspace{-0.3cm}
        \caption{ Achievable SE of SW-HBF schemes with $\fc=300~\ghz, B=30~{\rm GHz}$. Fig. (a) shows SE versus the number of antennas with $N_{\rm T}=N_{\rm R}$ and $N_{\rm S}=N_{\rm RF}=2$. Fig. (b) shows SE versus SNR with $N_{\rm S}=N_{\rm RF}=4$ and $N_{\rm T}=N_{\rm R}=256$. Fig. (c) shows SE versus the number of data streams with $N_{\rm RF}=N_{\rm S}$ and $\SNR=20~\dB,N_{\rm T}=N_{\rm R}=256$.}
        \label{fig:Achievable SE vs SNR and Ns} 
        	\vspace{-6mm}
  \end{figure*}  


We next compare our proposed schemes to four state-of-the-art HBF designs, including the fully-connected PS-HBF (FC-PS-HBF) \cite{ma2021closed}, fully connected TTD-aided HBF (FC-TTD-HBF) \cite{dai2022delay}, dynamically connected PS-HBF (Dyn-PS-HBF) \cite{li2020dynamic}, and dynamically connected FTTD HBF (Dyn-FTTD-HBF) \cite{yan2022energy}. Note that the FC-PS-HBF employs only PSs, while the FC-TTD-HBF uses both PSs and TTDs. In contrast, Dyn-PS-HBF utilizes a combination of SWs and PSs, and Dyn-FTTD-HBF uses a combination of SWs and FTTDs. 

Because the works \cite{dai2022delay,yan2022energy} focus on only the precoder design, we modify the algorithms therein for combiner designs. Furthermore, since the FC-PS-HBF, Dyn-PS-HBF, and FC-TTD-HBF designs employ a large number of PSs, which can severely burden their power consumption, we adopt $\{1,2, \infty\}$-bit PSs for the three schemes in the following simulations for a fair comparison, where $\infty$-bit PSs indicates the ideal infinite-precision PSs. In subsequent figures, we designate the HBF schemes with the infinite-precision PSs as the "Ideal" cases. Fig. \ref{fig:SE vs SNR} shows the SE versus SNR with $N_{\rm T}=N_{\rm R}=256$ and $B=30~{\rm GHz}$, wherein FC-SW-HBF refers to the SW-HBF architecture studied in this work for clarity. The resultant BSR of the system according to Proposition \ref{remark:BSR equation} is $1.6$, indicating the severe beam squint effect. We set the number of TTDs and FTTDs as one and two for each RF chain here and in the subsequent results, except Fig.\ \ref{fig:SE v.s. EE} wherein the SE and EE of more TTDs and FTTDs will be demonstrated. We can observe from Fig.~\ref{fig:SE vs SNR} that PS-based HBF schemes with ideal PSs achieve similar SE compared to those with 2-bit PSs. This occurs because the phase noise becomes a minor issue when the beam squint effect is relatively severe. Furthermore, it is observed that the proposed SW-HBF schemes perform better than the benchmark schemes from medium to high SNRs. For example, at SNR $= 20$ dB, the PGA-TS SW-HBF scheme achieves SE improvements by $38\%$, $28\%$, and $12\%$ compared to the FC-PS-HBF, FC-TTD-HBF, and Dyn-PS-HBF schemes with $2$-bit PSs, respectively. Additionally, it achieves $30\%$ higher SE than that of the Dyn-FTTD-HBF design. Particularly, setting $\Nnb=\{8,16\}$ only result in minor SE loss for the proposed SW-HBF design compared to $\Nnb=|\Scl|$. The former achieves 94\% and 97\% the SE of the latter at $\SNR=20~\dB$ and still outperforms the benchmark HBF designs. The superiority of SW-HBF over other schemes can also be seen in Fig.\ \ref{fig:SE v.s. Ns}, which shows the SE versus the number of data streams with $\SNR=20~\dB$ and $N_{\rm RF}=N_{\rm S}$. 

We next compare both the SE and EE of the considered schemes. The EE is defined as the ratio between the SE and the total power consumption of the transceiver \cite{gao2016energy,li2020dynamic,dai2022delay}.  The power consumption of the considered schemes are presented in Table \ref{tb:power consumption of HBF architectures}, where DBF represents the conventional digital beamforming scheme. Note that $\Ndt$ ($\Nfdt$) and $\Ndr$ ($\Nfdr$) denote the number of TTDs (FTTDs) at each RF chain in the Tx and Rx, respectively. Furthermore, the power consumption of an RF chain is given as $P_{\rm RF}=P_{\rm M}+P_{\rm LO}+P_{\rm LPF}+P_{\rm BBamp}$ \cite{mendez2016hybrid}, where $P_{\rm M}$, $P_{\rm LO}$, $P_{\rm LPF}$, and $P_{\rm BBamp}$ are the power consumption of the mixer, the local oscillator, the low pass filter, and the baseband amplifier, respectively. The power consumptions of the low noise amplifier (LNA) at each receive antenna, power amplifier (PA) at each transmit antenna, and the analog-to-digital (digital-to-analog) converters at the Rx (Tx) are denoted as $P_{\rm LNA}$, $P_{\rm PA}$ and $P_{\rm ADC}$, respectively. Furthermore, we denote $P_{\rm common}\triangleq \Nt P_{\rm PA}+\Nr P_{\rm LNA} $ as the common power consumption for all architectures. Note that each RF chain requires two quantizers to quantize the in-phase and quadrature-phase signals separately \cite{ma2023analysis}. In addition, the power consumption of the splitter, combiner, TTD, FTTD, PS, and SW are represented by $P_{\rm SP}$, $P_{\rm C}$, $P_{\rm TTD}$, $P_{\rm FTTD}$, $P_{\rm PS}$ and $P_{\rm SW}$, respectively. Finally, the power consumptions of all components are given in Table \ref{Tb:power each device}. We note that the $P_{\rm PS} $ in Table \ref{tb:power consumption of HBF architectures} specifically refers to  the power consumption of 1-bit, 2-bit, or infinite-precision PSs, with values of $10$ mW, $20$ mW, and $40$ mW, respectively \cite{mendez2016hybrid,kim2014220,li2020dynamic}. 
 \begin{table*}
\small
\renewcommand\arraystretch{1.5}
\centering
\caption{Power consumption of FC-SW-HBF and benchmark transceiver architectures. }
\vspace{-0.2cm}
\label{tb:power consumption of HBF architectures}
    \begin{tabular}{c|c} 
    \hline 
      Architecture & Power consumption\\
    \hline  
    DBF & $P_{\rm common}+ (N_{\rm T}+N_{\rm R})(P_{\rm RF}+2P_{\rm ADC})  $    \\
    \hline 
    FC-PS-HBF &  \makecell[c]{$P_{\rm common}+ (N_{\rm T}+N_{\rm R})N_{\rm RF}P_{\rm PS} +2N_{\rm RF}(P_{\rm RF}+2P_{\rm ADC})$ \\ $+ (\Nr+\Nrf)P_{\rm SP} +(\Nt + \Nrf)P_{\rm C}$}    \\
        \hline 
    Dyn-PS-HBF &  \makecell[c]{$P_{\rm common}+2N_{\rm RF}(P_{\rm RF}+2P_{\rm ADC})+ (N_{\rm T}+N_{\rm R})(P_{\rm SW}+P_{\rm PS}) $ \\ $+ \Nrf(P_{\rm SP}+P_{\rm C})$}    \\
    \hline 
     FC-TTD-HBF \tnote{*}  &  \makecell[c]{$P_{\rm common}+(N_{\rm T}+N_{\rm R})N_{\rm RF}P_{\rm PS} +2N_{\rm RF}(P_{\rm RF}+2P_{\rm ADC})+ \Nrf(\Ndt+\Ndr )P_{\rm TTD} $ \\ $ + (\Nr+\Nrf+ \Nrf \Ndt)P_{\rm SP} +(\Nt + \Nrf+ \Nrf \Ndr)P_{\rm C}$ }\\
    \hline
     Dyn-FTTD-HBF \tnote{**} & \makecell[c]{$P_{\rm common}+(N_{\rm T}+N_{\rm R})P_{\rm SW} +2N_{\rm RF}(P_{\rm RF}+2P_{\rm ADC}) $\\ $+ \Nrf (\Nfdt+\Nfdr )P_{\rm FTTD}+ \Nrf \Nfdt P_{\rm SP} +\Nrf \Nfdr P_{\rm C} $}\\
    \hline
    FC-SW-HBF & \makecell[c]{$P_{\rm common}+(N_{\rm T}+N_{\rm R})N_{\rm RF}P_{\rm SW} +2N_{\rm RF}(P_{\rm RF}+2P_{\rm ADC}) $ \\ $+ (\Nr+\Nrf)P_{\rm SP} +(\Nt + \Nrf)P_{\rm C} $} \\
    \hline
    
    \end{tabular}   
\end{table*}

\begin{table*}
\small
\renewcommand\arraystretch{1}
    \centering
    \caption{Power consumption of component devices.}\label{Tb:power each device}
    \vspace{-0.2cm}
        \begin{tabular}{rlr|| rlr}
            \hline
            Device &  Notation   & Value & Device &  Notation   & Value \\
            \hline
            Low Noise Amplifier (LNA) \cite{singh2023design} & $P_{\rm LNA}$ &$136$ mW &  Splitter \cite{abbas2017millimeter}   & $P_{\rm SP}$ &$19.5$ mW\\           
            Power amplifier (PA) \cite{li2021250} & $P_{\rm PA}$ &$268$ mW & Switch \cite{mendez2016hybrid}          & $P_{\rm SW}$ & $5$ mW \\
            Combiner \cite{abbas2017millimeter}       & $P_{\rm C}$     &$19.5$ mW & Local oscillator \cite{mendez2016hybrid}  & $P_{\rm LO}$   & $5$ mW\\
            Mixer   \cite{mendez2016hybrid}         & $P_{\rm M}$     & $19$ mW & Base-band amplifier \cite{mendez2016hybrid} & $P_{\rm BBamp}$ & $5$ mW\\
            Low pass filter \cite{mendez2016hybrid}  & $P_{\rm LPF}$  & $14$ mW & True-time-delayer (TTD)\cite{cho2018true} & $P_{\rm TTD}$   & $285$ mW\\
            ADC/DAC \cite{greshishchev201960}            & $P_{\rm ADC}$   & $560$ mW & Fixed True-time-delayer (FTTD)\cite{yan2022energy} & $P_{\rm FTTD}$   & $63$ mW\\
            \hline
        \end{tabular}
\end{table*}

  \begin{figure*}[htbp]
  \vspace{-1mm}
        \hspace{-5mm}
        \subfigure[Average SE versus BSR.]
        {\label{fig:SE v.s. BW} \includegraphics[width=0.33\textwidth]{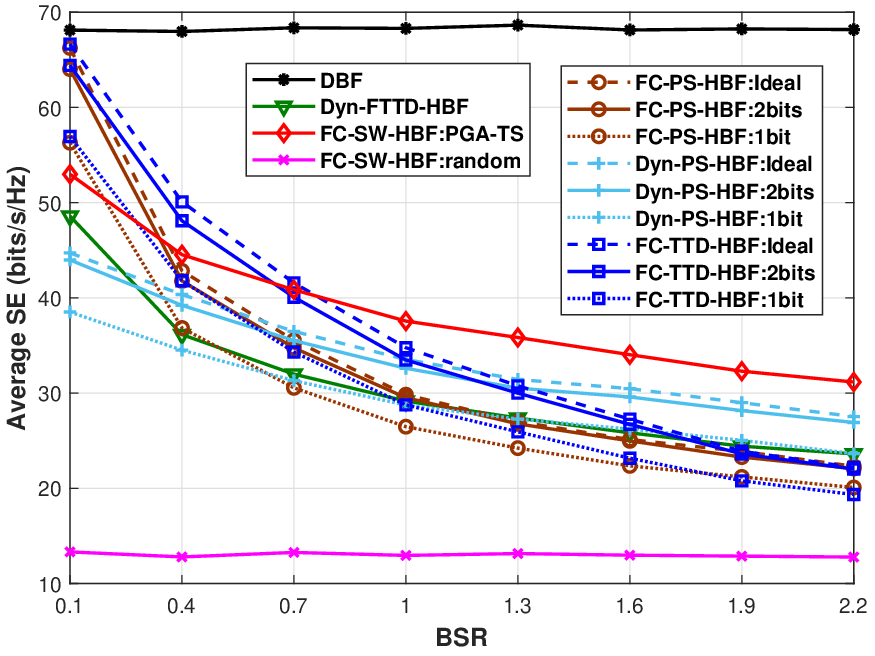}}
        \subfigure[Average EE versus BSR.]
        {\label{fig:EE v.s. BW}\includegraphics[width=0.33\textwidth]{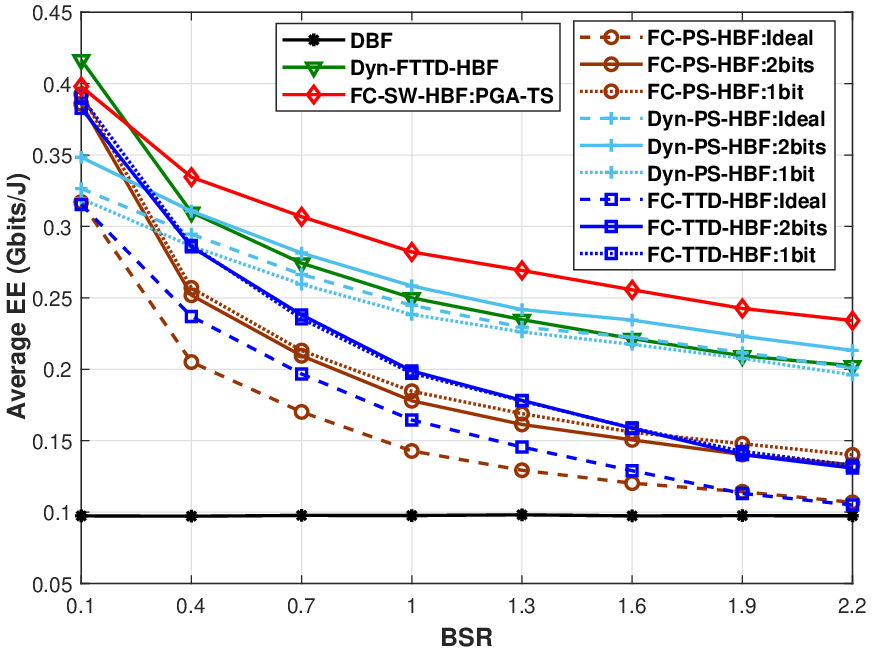}} 
        \subfigure[Average SE versus average EE with $B=30~\ghz$,]
        {\label{fig:SE v.s. EE} \includegraphics[width=0.33\textwidth]{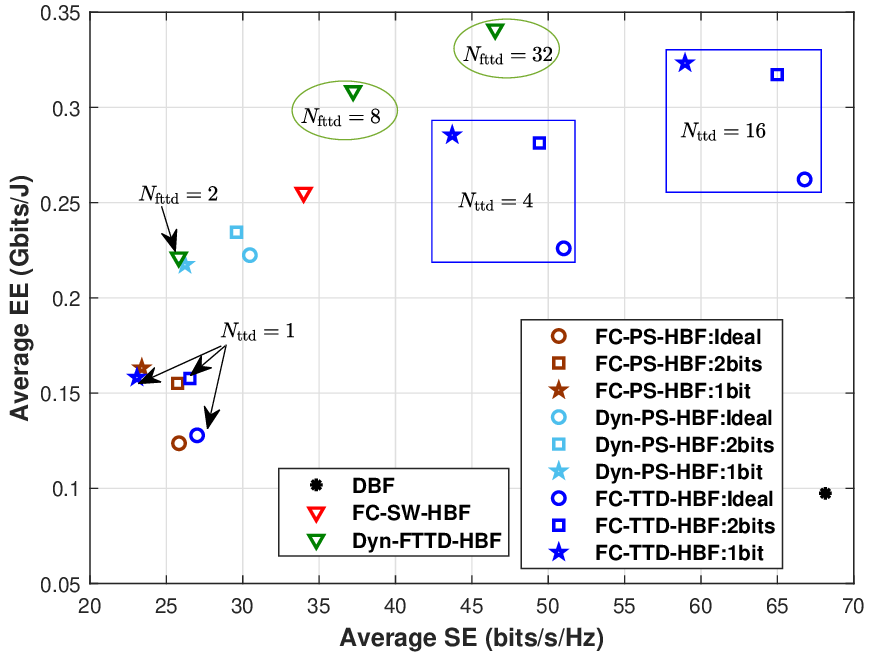}}
        \captionsetup{font={small}}
        \vspace{-3mm}
        \caption{ Average SE and average EE versus $\BSR=\{0.1,0.4,0.7,1,1.3,1.6,1.9,2.2\}$, which are obtained with bandwidth $B=\{1.875, 7.5, 13.125, 18.75, 24.375, 30, 35.625, 41.25\}~\ghz$ and $\fc=300~\ghz,N_{\rm T}=N_{\rm R}=256,N_{\rm S}=N_{\rm RF}=4,{\rm SNR}=20~{\rm dB}$.}
        \label{fig:SE an EE vs BSR} 
        	\vspace{-5mm}
  \end{figure*}  

  In Figs.\ \ref{fig:SE v.s. BW}--\ref{fig:EE v.s. BW}, we show the SE and EE of the considered schemes versus the BSR with $N_{\rm T}=N_{\rm R}=256, N_{\rm S}=N_{\rm RF}=4,{\rm SNR}=20~{\rm dB}$ and various system bandwidth. We note the following observations. Firstly, at $\text{BSR}=0.1$, the FC-PS-HBF scheme with $2$-bit resolution achieves up to $97\%$ the SE of the optimal DBF. Although the system has a large bandwidth $1.875~\ghz$ and a large number of antennas, the beam squint effect can be negligible, which confirms the findings in Section \ref{sc:beam squint effect analysis}. Secondly, it is seen for all the compared HBF schemes that the SE decreases as the BSR increases with the bandwidth. The FC-TTD-HBF designs with 2-bit and ideal PSs outperform other schemes when $\BSR \leq 0.7$. However, it performs worse than the proposed FC-SW-HBF design for increased bandwidth due to the more pronounced beam squint effect. Notably, the SE of the FC-PS-HBF schemes decreases more rapidly with increasing BSR compared to the FC-SW-HBF scheme. This observation indicates that the FC-SW-HBF design is more resilient to the beam squint effect than the FC-PS-HBF scheme. Consequently, the former exhibits significantly better performance than the latter. For instance, at $\BSR= 1.6$, the FC-SW-HBF design attains $38\%$ SE improvements compared to the $2$-bit FC-PS-HBF scheme. Furthermore, the higher power consumption of PSs makes the EE of the FC-PS-HBF scheme significantly lower than that achieved by the FC-SW-HBF scheme, as demonstrated in Fig.\ \ref{fig:EE v.s. BW}. Specifically, the FC-SW-HBF scheme attains $61\%$ EE improvements compared to the FC-PS-HBF scheme with $2$-bit PS at $\BSR= 1.6$. 
  
  The Dyn-PS-HBF design outperforms the FC-PS-HBF scheme but still performs worse than the proposed FC-SW-HBF design in terms of both SE and EE. This is because the Dyn-PS-HBF is more sensitive to the beam squint and has a lower array gain than the FC-SW-HBF scheme. Moreover, the FC-SW-HBF scheme also achieves higher SE and EE than the FC-TTD-HBF and Dyn-FTTD-HBF schemes for severe beam squint. Finally, we plot the SE--EE map of the considered schemes in Fig.\ \ref{fig:SE v.s. EE} wherein $N_{\rm td}$ ($N_{\rm ftd}$) represents the number of TTDs (FTTDs) in each RF chain (assuming $\Ndr=\Ndt=N_{\rm td}$ and $\Nfdr=\Nfdt=N_{\rm ftd}$). It can be observed that deploying more TTDs and FTTDs leads to a better SE--EE tradeoff, consistent with the findings in \cite{dai2022delay} and \cite{yan2022energy}. However, for limited TTDs and FTTDs, the proposed FC-SW-HBF design demonstrates its superiority over the benchmarks. 

    \begin{figure}[htbp]
        \hspace{-3mm}
      \subfigure[]
         {\label{fig:beam pattern vs ps}\includegraphics[width=0.24\textwidth]{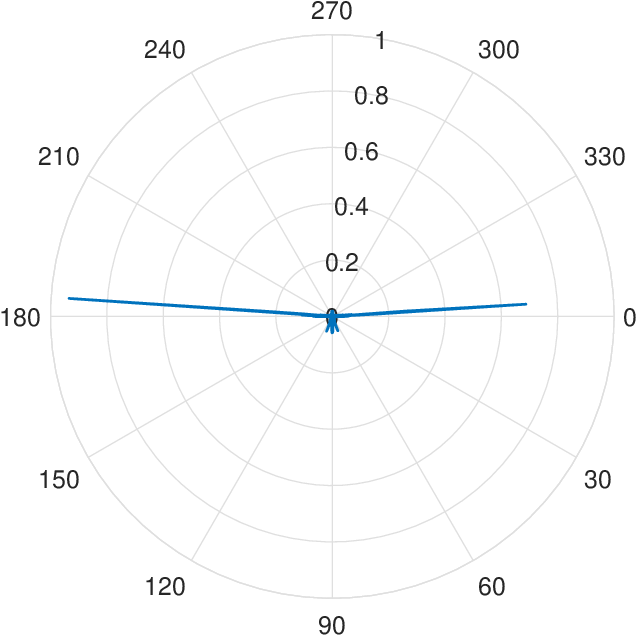}} \hspace{-1mm}
        \subfigure[ ]
        {\label{fig:beam pattern vs sw} \includegraphics[width=0.24\textwidth]{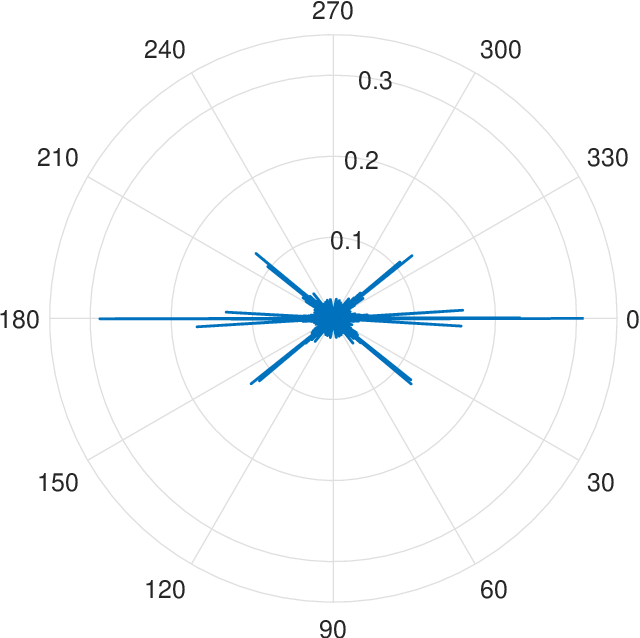}}
        \captionsetup{font={small}}
        \vspace{-0.3cm}
        \caption{  An example of beam pattern of analog beamformers with $\fc=300~\ghz,B=30~\ghz,N_{\rm T}=N_{\rm R}=256$, $\Ns=\Nrf=1$. Fig. (a) and Fig. (b) display the beam pattern of the PS-based and SW-based analog beamformer, respectively.}
        \label{fig:beam pattern} 
  \end{figure} 
  
Fig.~\ref{fig:beam pattern} shows an example of the beam pattern of PS-based and SW-based analog beamformers with $B=30~\ghz,N_{\rm T}=N_{\rm R}=256$, and $\Ns=\Nrf=1$. It can be observed that although having reduced beamforming gains, an SW-based analog beamforming architecture can generate more beams than the PS-based one. Under the severe beam squint effect, the former can cover the desired angles of incoming signals with a higher probability than the latter, yielding a higher EAG as analyzed in Section~\ref{sec:EAG of SW and PS array}. Therefore, the SW-HBF scheme is more resilient to the beam squint than the PS-HBF one. Consequently, the SW-HBF scheme can achieve a higher SE than the PS-HBF counterpart in the case of a severe beam squint and perform better at higher SNRs. Moreover, we observe that the complexity of the FC-PS-HBF design \cite{ma2021closed} is approximately $\Ocl(K\Nt^2\Nr)$, which is lower than that of the proposed FC-SW-HBF design. However, it is worth noting that the higher complexity of the proposed method, compared to the PS-based counterparts, does not necessarily imply inefficiency in the proposed algorithm. This is because the hardware constraints of the designs are different. Our simulation for a system with $\Nrf=\Ns=2$ and $\Nr=\Nt=12$ shows that the PGA-TS SW-HBF scheme achieves a runtime reduction of $ 2097100\%$ compared to the optimal exhaustive search. 

 \subsection{Performance of SW-HBF in MU-MISO Systems}\label{sec:MU simulations}
 \begin{figure*}[t]
        \centering
        \hspace{-5mm}
        \subfigure[]
        {\label{fig:SE v.s. Ntes MU} \includegraphics[width=0.33\textwidth]{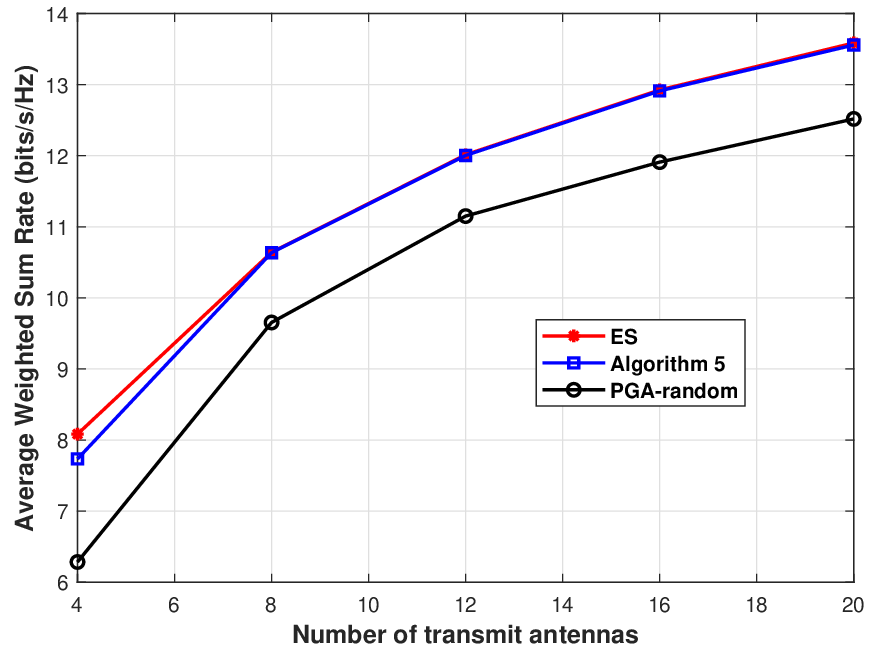}}
      \subfigure[]
         {\label{fig:SE vs SNR MU}\includegraphics[width=0.33\textwidth]{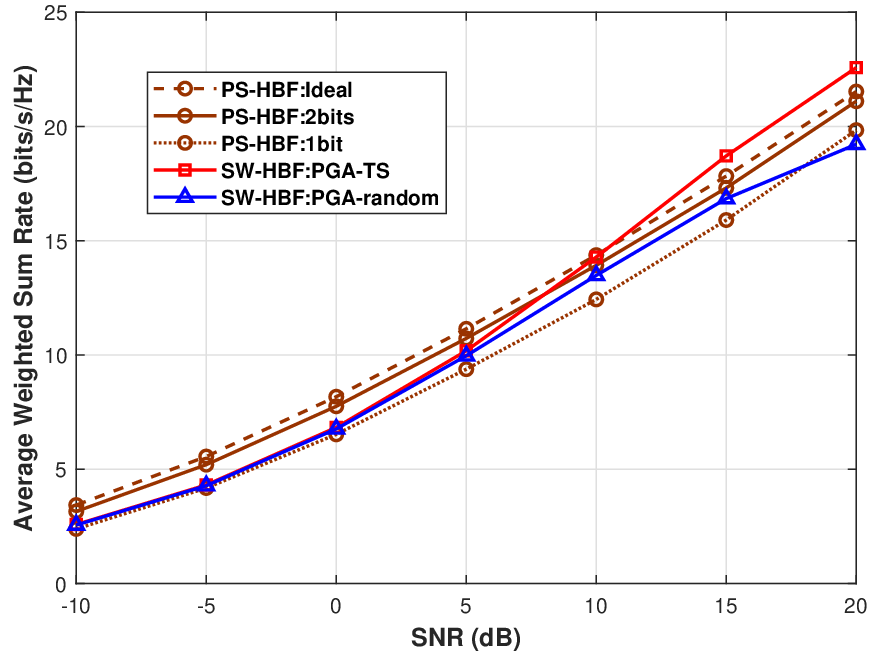}} 
        \subfigure[]
        {\label{fig:SE v.s. Nu MU} \includegraphics[width=0.33\textwidth]{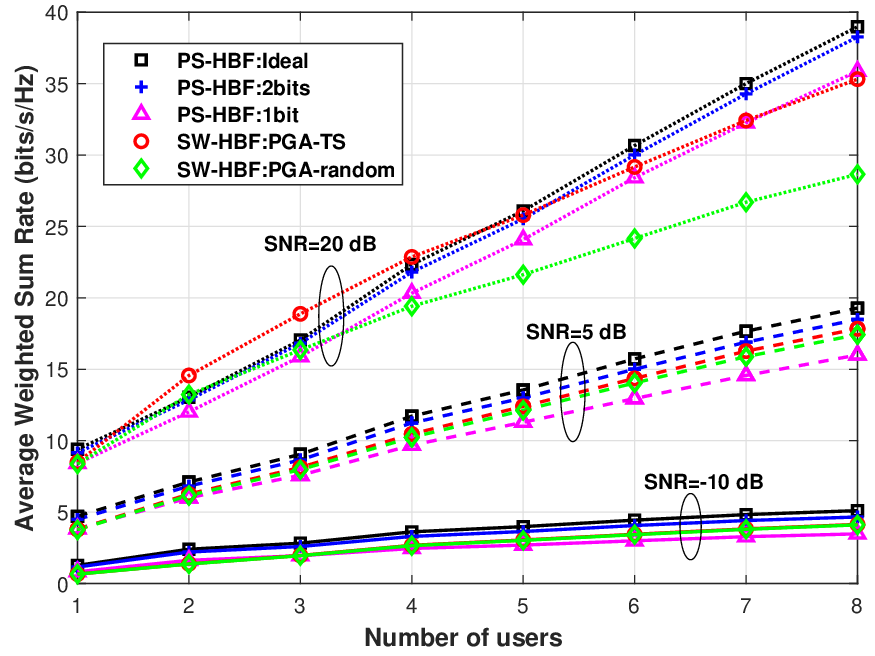}}
        \captionsetup{font={small}}
        \vspace{-0.3cm}
        \caption{ Average weighted sum rate of the SW-HBF scheme with $\fc=300~\ghz, B=30~{\rm GHz}$. Fig. (a) shows sum rate versus the number of transmit antennas ($\Nt$) with $\Nu=\Nrf=4$ and $\SNR=10~\dB$. Fig. (b) shows sum rate versus SNR with $N_{\rm T}=256$ and $\Nu=N_{\rm RF}=4$. Fig. (c) shows sum rate versus the number of users ($\Nu$) with $N_{\rm RF}=\Nu$ and $N_{\rm T}=256,\SNR=20~\dB$.}
        \label{fig:sum rate performance vs Nt SNR Nu} 
        	\vspace{-5mm}
  \end{figure*} 

    \begin{figure}[htbp]
   \vspace{-0.5cm}
        \centering	
         \includegraphics[width=0.35\textwidth]{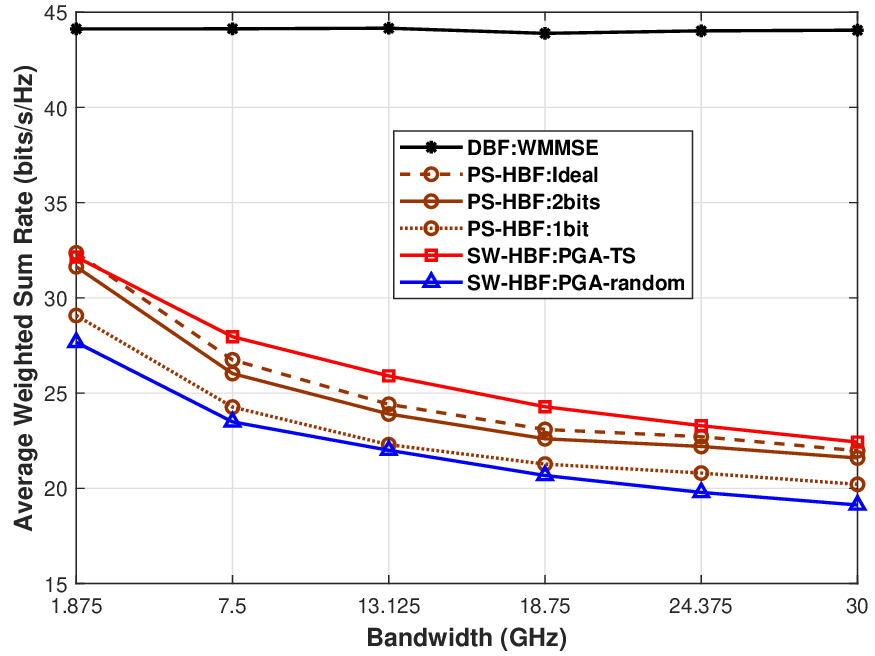}
         \captionsetup{font={small}}
         \vspace{-0.1cm}
        \caption{Average weighted sum rate versus bandwidth $B$ with $\fc=300~\ghz,N_{\rm T}=256,N_{\rm RF}=\Nu=4$ and ${\rm SNR}=20~{\rm dB} $.}
        	\label{fig:SE vs BSR MU}
        	\vspace{-0.1cm}
  \end{figure}
      \begin{figure}[htbp]
        \centering	
         \includegraphics[width=0.35\textwidth]{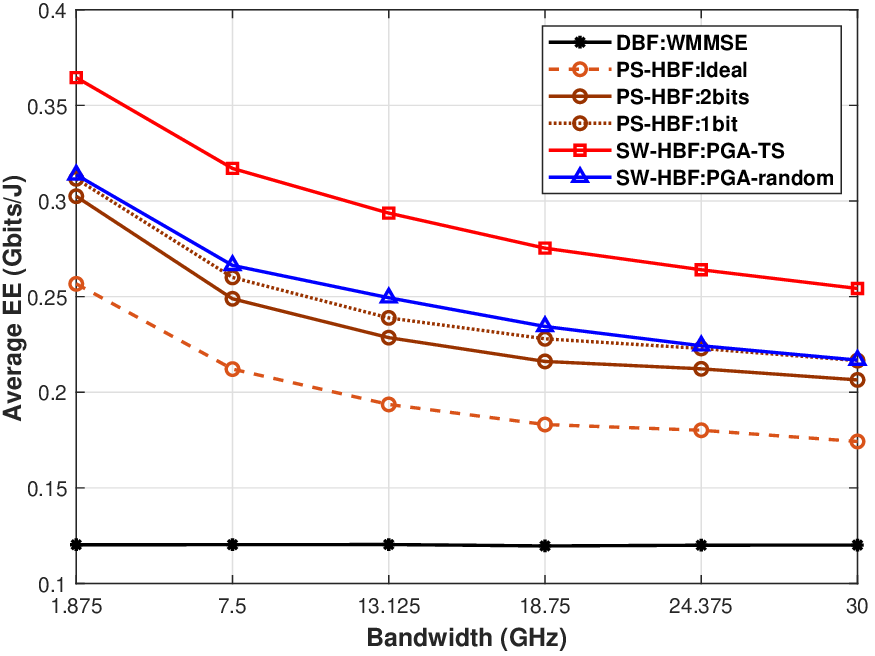}
         \captionsetup{font={small}}
         \vspace{-0.1cm}
        \caption{EE versus bandwidth $B$ with $\fc=300~\ghz,N_{\rm T}=256,N_{\rm RF}=\Nu=4$ and ${\rm SNR}=20~{\rm dB} $.}
        	\label{fig:EE vs BSR MU}
  \end{figure}
In this part, the channel from BS to each user at subcarrier $k$ is modeled as \cite{li2019hybrid}
    \begin{equation}\label{eq:d delay channel MU}
   \hb[k]=  \sqrt{\frac{N_{\rm T}}{L_{\rm p}}} \sum\limits_{l=1}^{L_{\rm p}} \alpha_l g_k(\tau_l)  \ab_{\rm t}\left( \theta_l^{\rm t}, f_k \right) , 
\end{equation}
where $g_k(\tau_l)\triangleq \sum\limits_{d=1}^{D-1}p\left(d T_{\rm s}-\tau_l\right)e^{-j\frac{2\pi k}{K}d} $. The channel parameters are set the same as those in Section \ref{sec:simulations for SU-MIMO} unless specified otherwise. Without loss of generality, we set the user weight $w_m=1, \forall m$ in subsequent simulations. 

We first show the efficiency of Algorithm \ref{alg:two-step MU algorithm} by comparing it with the ES method and PGA-random approach in Fig.\ \ref{fig:SE v.s. Ntes MU}. Specifically, the ES method examines all possible combinations of all RF chains and antennas based on the non-integer entries of $\Fb_{\rm RF}^{\star}\in \Fcl$ after running step 2 of Algorithm \ref{alg:two-step MU algorithm}. In contrast, the PGA-random method randomly sets the connection between all RF chains and antennas based on the non-integer entries of $\Fb_{\rm RF}^{\star}\in \Fcl$. It is seen in Fig.\ \ref{fig:SE v.s. Ntes MU} that Algorithm \ref{alg:two-step MU algorithm} performs close to the ES method and outperforms the PGA-random one. Moreover, the PGA-random approach performs relatively well, considering its low complexity. This is because the PGA solution $\Fb_{\rm RF}^{\star}\in \Fcl$ has only a few non-integer entries that need to be further optimized. Furthermore, the fewer non-integer entries lead to a small $\Icl_{\rm N}$, i.e., the size of the neighbor set in each TS iteration, which can significantly reduce the complexity and improve the convergence of PGA-TS procedure in step 2 of Algorithm \ref{alg:two-step MU algorithm}. We note that increasing the dimension of the analog precoder leads to a larger number of non-integer entries in the PGA solution $\Fb_{\rm RF}^{\star}\in \Fcl$. Consequently, the performance gap between Algorithm \ref{alg:two-step MU algorithm} and the PGA-random method becomes more pronounced as the number of antennas increases. 

In Figs.\ \ref{fig:SE vs SNR MU} and \ref{fig:SE v.s. Nu MU}, we show the average weighted sum rate versus SNR and the number of users $\Nu$, respectively. Here, Algorithm \ref{alg:two-step MU algorithm} is referred to as ``PGA-TS SW-HBF,'' while the PS-HBF is a benchmark developed in \cite{sohrabi2017hybrid}. It is observed that the proposed SW-HBF scheme can outperform the 2-bit PS-HBF from medium to high SNRs (i.e., $\SNR \geq 10 $ dB). For example, $7\%$ improvements of the average weighted sum rate are achieved by SW-HBF compared to the 2-bit PS-HBF at $\SNR=20$ dB. The superiority of the proposed SW-HBF design can also be justified in Fig.\ \ref{fig:SE v.s. Nu MU}, wherein the SW-HBF scheme achieves higher sum rate than the PS-HBF counterpart for $\Nu\in \{2,3,4\}$ at $\SNR=20$ dB. However, with more users or lower SNRs, the PS-HBF scheme marginally outperforms the proposed SW-HBF design. Nevertheless, the latter can achieve a higher EE than the former due to the substantially lower power consumption of switches compared to PSs. 

Figs.\ \ref{fig:SE vs BSR MU} and \ref{fig:EE vs BSR MU} show the average weighted sum rate and EE versus system bandwidth, respectively, with $\Nt=256,\Nrf=\Nu=4$ and $\SNR=20~\dB$. In these simulations, we also present the performance of the DBF scheme that employs the WMMSE algorithm \cite{shi2011iteratively}.  
It is observed that even though the performance of both SW-HBF and PS-HBF schemes deteriorate with increasing bandwidth due to the beam squint effect, the former outperforms the latter in terms of both sum rate and EE. Moreover, although the DBF achieves the largest SE, it has the smallest EE due to the excessively high power consumption.

\vspace{-3mm}
\section{Conclusions}\label{sc:conclusions}
We have studied the performance of SW-HBF in wideband mmWave communications. We first present a closed-form expression of the BSR, which reveals that the beam squint effect linearly increases with the number of antennas, fractional bandwidth, and normalized antenna spacing. Moreover, we propose a transceiver design for SW-HBF in SU-MIMO OFDM systems and a hybrid precoding algorithm for MU-MISO OFDM systems. The numerical results verify that in both SU and MU scenarios, the proposed SW-HBF schemes exhibit higher resilience to beam squint. Thus, they outperform the compared HBF architectures in terms of SE and EE, especially when the beam squint effect becomes severe. These new findings make the SW-HBF a compelling solution in wideband systems. 
Future work may concern the imperfect channel information and hardware components, such as ADCs/DACs and PAs. Further investigation into the impact of spatial oversampling, including the effects of grating lobes, and the intelligent transmission surface-based transceiver are  important future research challenges. 
\vspace{-5mm}
 \appendices
 
\section{Proof of Proposition \ref{lemma:EAG of PS analog beamformer}}
\label{appd_arraygain_PS}
As it is difficult to obtain the integration in \eqref{eq:EAG for PS array}, we utilize the Newton polynomial to fit it by three points $(-1,g \left( -1,\xi_k\right) )$, $(0,g\left( 0,\xi_k\right))$, and $(1,g \left( 1,\xi_k\right))$, which leads to
\begin{align}\label{eq:Newton polynomial interpolation}
      &\int^{1}_{-1}  g\left( \vartheta,\xi_k \right) {\rm d}\vartheta \nonumber \\
      &\int^{1}_{-1}  g\left( \vartheta,\xi_k \right) {\rm d}\approx \int^{1}_{-1} \left[\left(g \left( -1,\xi_k\right) -g\left( 0,\xi_k\right)\right)\vartheta^2+g\left( 0,\xi_k\right)\right] d\vartheta \nonumber \\
      &= \frac{4}{3}g\left( -1,\xi_k\right)  + \frac{2}{3}g\left( 0,\xi_k\right).
\end{align}
Since the normalized sinc function ${\rm sinc}(x)\triangleq \frac{\sin(\pi x)}{\pi x}$ achieves the maximum value 1 at $x=0$, i.e., $g\left( 0,\xi_k\right)=1$, the EAG of the PS-based array is expressed as $    \Es_{\rm ps}[g\left( \vartheta,\xi_k \right)]=\frac{2}{3K} \sum\limits_{k=1}^K g\left( -1,\xi_k\right) +\frac{1}{3}.$
With \eqref{eq:xi bandwidth}, we obtain 
\begin{subequations}
 \begin{align}
    \Es_{\rm ps}[g\left( \vartheta,\xi_k \right)]&=\frac{2}{3K}\sum\limits_{k=1}^K\left|  \frac{\sin\left( N \Delta b \pi c_k   \right)}{N \sin\left(  \Delta b \pi c_k   \right)}\right|+ \frac{1}{3} \nonumber \\
    &\overset{(i)}{\approx} \frac{2}{3K}\sum\limits_{k=1}^K\left| {\rm sinc}\left(N \Delta b c_k \right)  \right|+ \frac{1}{3} \nonumber\\
    &\overset{(ii)}{\approx}\frac{2}{3} \int_0^1 \left|{\rm sinc}\left(N \Delta b \left(x-\frac{1}{2}\right)  \right) \right|dx+ \frac{1}{3} \nonumber\\
    &= \frac{4}{3 N\Delta b} \int_0^{\frac{N\Delta b}{2}} \left|{\rm sinc}( x)\right|dx + \frac{1}{3} \nonumber\\
  & \approx \frac{2}{ 3\text{~BSR}} \int_0^{\text{BSR}} \left|{\rm sinc}\left(4x\right)\right|dx + \frac{1}{3} \label{eq:BSR-EAG},
\end{align}
\end{subequations}
where $c_k\triangleq\frac{k}{K}-\frac{K+1}{2K}$; $(i)$ and $(ii)$ are due to $\left|\Delta b \pi c_k\right| \ll 1$ and $K \gg 1$, respectively. Equation \eqref{eq:BSR-EAG} is obtained with the $\text{BSR}$ in \eqref{eq:BSR closed-form}. Defining $f(z)\triangleq\frac{1}{z} \int_0^{z} \left|{\rm sinc}(4x)\right|dx$, we prove that $f(z)$ monotonically decreases with $z$ next. The first derivative of $f(z)$ is 
\begin{align}
\small
    f'(z)&=\frac{1}{z}\left( \left|{\rm sinc}\left(4z\right) \right|- \frac{1}{z} \int_0^{z} \left|{\rm sinc}\left( 4z\right) \right|dx\right) \nonumber\\
    &\overset{(i)}{=}\frac{1}{z}\left( \left|{\rm sinc}\left( 4z \right) \right|- \left|{\rm sinc}\left( 4\varepsilon\right) \right| \right),
\end{align}
where $(i)$ is due to the utilization of the {\it Mean Value Theorem} with $\varepsilon \in (0,z)$. 

 For $z \in (0,\frac{1}{4})$, the ${\rm sinc}\left( 4z\right)$ is non-negative and monotonically decreases, thus $f'(z)<0$.
 
 For $z \in [\frac{1}{4},\infty)$, because of $\left|{\rm sinc}(4z)\right|=\left|\frac{\sin(4\pi z)}{4\pi z}\right|\leq |\frac{1}{4\pi z}|=\frac{1}{4 \pi z}$ with $\left|\sin(4\pi z)\right|\leq 1$ and $z>0$, we obtain 
    \begin{align}
        f'(z) &\leq \frac{1}{z}\left( \frac{1}{4\pi z}- \frac{1}{z} \int_0^{z} \left|{\rm sinc}\left( 4x\right)\right|dx\right) \nonumber\\
        &=\frac{1}{z^2}\left( \frac{1}{4\pi}-\int_0^{\frac{1}{4}} {\rm sinc}\left( 4x\right)dx -\int_{\frac{1}{4}}^{z} \left|{\rm sinc}\left( 4x\right)\right|dx\right) \nonumber
    \end{align}
Since  $\frac{1}{4\pi}-\int_0^{\frac{1}{4}} {\rm sinc}\left( 4x\right)dx < 0$   and $\int_{\frac{1}{4}}^{z} \left|{\rm sinc}\left( 4x\right)\right|dx>0$, we obtain $ f'(z)<0$ for $z \in [\frac{1}{4},\infty)$.

In summary, the first derivative of $f(z)$ is negative in $(0,\infty)$, indicating that $f(z)$ monotonically decreaes with $z$ in $(0,\infty)$. Therefore, the EAG of the PS-based array monotonically decreases with BSR in \eqref{eq:BSR closed-form}. From \eqref{eq:BSR-EAG}, we can obtain $\frac{1}{3}\leq \Es_{\rm ps}[g\left( \vartheta,\xi_k \right)]\leq 1$. The lower bound is attained as $\text{BSR} \rightarrow \infty$, while the upper bound is achieved as $\text{BSR} \rightarrow 0$. The proof is completed.
\vspace{-5mm}
\section{Proof of Proposition~\ref{prop:EAG of SW-based array}}\label{appdix:proof of SW EAG}
Define $\Omegab (\vartheta )$ as a $K\times N$ matrix with the $n$-th column given by $\frac{1}{K\sqrt{N }} \left[ e^{-j2\Delta (n-1)\pi \xi_1  \vartheta}, \ldots, e^{-j2\Delta (n-1)\pi \xi_K  \vartheta} \right]^T$. We have
\begin{align}\label{eq:EAG of switch bases array expansion}
    &\Es_{\rm sw}[g\left( \wb, \vartheta\right)] = \nonumber \\
    &\frac{1}{2K\sqrt{N \left\|\wb \right\|_1}}\int^{1}_{-1}  \sum_{k=1}^K   \left|\sum\limits_{n=1}^{N} w_n e^{-j2\pi(n-1)\Delta  \xi_k \vartheta }\right| d\vartheta \nonumber \\
    &=\frac{1}{2} \int^{1}_{-1} \tilde{g}(\wb,\vartheta)   d\vartheta,
\end{align}
where $\tilde{g}(\wb,\vartheta) \triangleq \frac{\left\|\Omegab (\vartheta )\wb \right\|_1}{\sqrt{\left\|\wb \right\|_1}}$. By using the Newton polynomial fitting the integration in \eqref{eq:EAG of switch bases array expansion} at three points $(-1,\tilde{g}(\wb,-1) )$, $(0,\tilde{g}(\wb,0)  )$ and $(1,\tilde{g}(\wb,1)  )$, we obtain
\begin{align}
    \tilde{g}(\wb,\vartheta) & \approx \left( \frac{\tilde{g}(\wb,1) +\tilde{g}(\wb,-1) }{2}-\tilde{g}(\wb,0)   \right)\vartheta^2  \nonumber \\
    &\phantom{\approx}\ +\frac{\tilde{g}(\wb,1)-\tilde{g}(\wb,-1)}{2}\vartheta +\tilde{g}(\wb,0) \nonumber \\
    &\overset{(i)}{=}\left( \tilde{g}(\wb,1) -\tilde{g}(\wb,0) \right)\vartheta^2+\tilde{g}(\wb,0),
\end{align}
where $(i)$ is due to $\tilde{g}(\wb,-\vartheta)= \tilde{g}(\wb,\vartheta)$. Therefore, we obtain
\begin{align}\label{eq:interpolated expected array gain sw}
&\Es_{\rm sw}[g\left( \wb, \vartheta \right)] \approx \medmath{\frac{1}{2}\int^{1}_{-1} \left(\left( \tilde{g}(\wb,1) -\tilde{g}(\wb,0) \right)\vartheta^2+\tilde{g}(\wb,0) \right) d\vartheta}  \nonumber\\  
&=\frac{1}{3}\tilde{g}(\wb,1) +\frac{2}{3}\tilde{g}(\wb,0) \overset{(ii)}{\approx} \frac{2}{3}\sqrt{\frac{\left\|\wb \right\|_1}{N}},
\end{align}
where approximation $(ii)$ follows 
\begin{align}\label{eq:switch EAG term1}
    &\tilde{g}(\wb,1)= \nonumber \\[-5pt]
    &
    \frac{1}{K\sqrt{N \left\|\wb \right\|_1}} \sum\limits_{k=1}^K \left|\sum\limits_{n=1}^{N} w_n e^{-j2\Delta\pi\left[1+\left(\frac{k}{K}-\frac{K+1}{2K} \right) b\right](n-1)}\right|
    \approx 0
\end{align}
by the {\it Law of Large Numbers} as $\left\|\wb \right\|_1 \gg 1$ in large-scale MIMO systems. Since 
$\left\|\wb \right\|_1 \leq N$, we have $\Es_{\rm sw}[g\left( \wb, \vartheta \right)] \leq \frac{2}{3}$, where the equality may hold when each RF chain is connected to all antennas. 

\vspace{-5mm}
\section{Proof of Proposition \ref{prop:FP equivalent expression}}\label{sec:appendix B}
By introducing $K$ auxiliary vectors $\rb[k]=[r_1[k],\ldots,r_{\Nu}[k]]^T,\forall k$, the objective value of problem \eqref{pb:hybrid precoding in Multiuser cases} can be achieved by the following problem
\begin{subequations}\label{pb:relaxed SINR sum-rate}
  \begin{align}
    \underset{\{ \rb[k]\}_{k=1}^K} {\text{max}} \; & \frac{1}{K} \sum_{(k,m)}  w_m \log_2(1+r_m[k])\\
    \text{s.t.} \;  &r_m[k] \leq\SINR_m[k] , \forall m,k, \label{eq:relaxed SINR}
\end{align}
\end{subequations}
where $\SINR_m[k] \triangleq \frac{\left|\hb_m^H[k]\Frf \fb_{{\rm BB}_m}[k] \right|^2}{\sum\limits_{i\neq m}^{\Nu}\left| \hb_m^H[k]\Frf \fb_{{\rm BB}_i}[k] \right|^2 + \sigma_{\rm n}^2}$.
This is because the equality holds for \eqref{eq:relaxed SINR} at the optima of problem \eqref{pb:relaxed SINR sum-rate}. By introducing $K$ multipliers $\lambdab[k]=[\lambda_1[k],\ldots, \lambda_{\Nu}[k]]^T,\forall k$, we can form a Lagrangian function as
\begin{align}\label{eq:Lagrangian function}
   & L(\{ \rb[k], \lambdab[k]\}_{k=1}^K)=\frac{1}{K} \sum_{(k,m)}  w_m \log_2(1+r_m[k]) \nonumber \\
    &\quad \quad -\sum_{(k,m)} \lambda_m[k]\left(r_m[k]-   \SINR_m[k]\right).
\end{align}
With $\frac{\partial L}{\partial \lambda_m[k]}=0$ and $\frac{\partial L}{\partial r_m[k]}=0$, we obtain the optimal $r_m[k]$ and $\lambda_m[k] $ as
\begin{align}
    &r_m^{\star}[k]=\SINR_m[k], \forall k,m, \label{eq:optimal r}\\
    &\lambda_m^{\star}[k]=\frac{w_m}{K \ln 2 \left(1+\SINR_m[k]\right)}, \forall k,m.  \label{eq:optimal lambda}
\end{align}
Inserting \eqref{eq:optimal lambda} back to \eqref{eq:Lagrangian function} and with some algebra, we obtain 
\begin{align}\label{eq:fr}
    &f_r(\Frf, \{ \Fbb [k], \rb[k]\}_{k=1}^K)= \nonumber \\
    &\frac{1}{K} \sum_{(k,m)}  w_m \log_2(1+r_m[k]) -\frac{1}{K\ln 2} \sum_{(k,m)}w_m r_m[k]\nonumber \\
    &  +\frac{1}{K\ln2} \sum_{(k,m)} \frac{w_m(r_m[k]+1)\left|\hb_m^H[k]\Frf \fb_{{\rm BB}_m}[k] \right|^2}{\sum\limits_{i} \left| \hb_m^H[k]\Frf \fb_{{\rm BB}_i}[k] \right|^2 + \sigma_{\rm n}^2}.
\end{align}
In this case, setting $\frac{\partial f_r}{\partial r_m[k]}=0$ yields \eqref{eq:optimal r}, with which $f_r(\Frf, \{ \Fbb [k], \rb^{\star}[k]\}_{k=1}^K)$ is same as the objective function of problem \eqref{pb:hybrid precoding in Multiuser cases}. It is observed that the last term of \eqref{eq:fr} has a multiple-ratio form, which is typical in FP problems. With the {\it quadratic transform} \cite{shen2018fractional}, we have
\begin{align}
   &\sum_{(k,m)} \frac{w_m(r_m[k]+1)\left|\hb_m^H[k]\Frf \fb_{{\rm BB}_m}[k] \right|^2}{\sum\limits_{i} \left| \hb_m^H[k]\Frf \fb_{{\rm BB}_i}[k] \right|^2 + \sigma_{\rm n}^2}=\nonumber\\
   &\max_{\{  \qb[k]\}_{k=1}^K} g(\qb[k]\}_{k=1}^K)
\end{align}
with 
\begin{align}\label{eq:quadratic transform}
   &g(\qb[k]\}_{k=1}^K)\triangleq  \nonumber \\[-3pt]
   &\sum_{(k,m)}2\sqrt{w_m(r_m[k]+1)}\Re\{q_m[k]^* \hb_m^H[k]\Frf \fb_{{\rm BB}_m}[k]\} \nonumber \\[-8pt]
   & \qquad -\sum_{(k,m)} |q_m[k]|^2 \left(\sum\limits_{i} \left| \hb_m^H[k]\Frf \fb_{{\rm BB}_i}[k] \right|^2 + \sigma_{\rm n}^2 \right),
\end{align}
where $\qb[k]=[q_1[k],\ldots,q_{\Nu}[k]]^T, \forall k$ are introduced $K$ auxiliary vectors. Setting $\frac{\partial g}{\partial q_m[k]}=0$ yields 
\begin{equation}\label{eq:optimal q}
    q_m^{\star}[k]=\frac{\sqrt{w_m(r_m[k]+1)}\hb_m^H[k]\Frf \fb_{{\rm BB}_m}[k]}{\sum\limits_{i} \left| \hb_m^H[k]\Frf \fb_{{\rm BB}_i}[k] \right|^2 + \sigma_{\rm n}^2}.
\end{equation}
Therefore, with the optimal $r_m[k]$ and $q_m[k]$ given as \eqref{eq:optimal r} and \eqref{eq:optimal q}, respectively, optimizing the objective function \eqref{eq:fr} is equivalent to maximizing the objective function \eqref{eq:fq} in the sense that they achieve the same optimal value and solutions. The proof is completed.
\bibliographystyle{IEEEtran}
\bibliography{conf_short,jour_short,SW_HBF}

\end{document}